\newtheorem{example}{Example}
\newtheorem{theorem}{Theorem}[section]
\newtheorem{lemma}{Lemma}
\newtheorem{definition}{Definition}[section]
\newtheorem{proof}{Proof}
\newcommand{\mat}[1]{\bm{#1}}
\newcommand{\ten}[1]{\bm{\mathcal{#1}}}
\newcommand{\kpr}[1]{\textsuperscript{\textcircled{#1}}}
\begin{document}

\begin{frontmatter}

\title{Tensor network subspace identification of polynomial state space models}


\author[HKU]{Kim Batselier}\ead{kimb@eee.hku.hk},    
\author[HKU]{Ching Yun Ko}\ead{cyko@eee.hku.hk},    
\author[HKU]{Ngai Wong}\ead{nwong@eee.hku.hk},               

\address[HKU]{The Department of Electrical and Electronic Engineering, The University of Hong Kong}  

\begin{keyword}                           
subspace methods; tensors; MIMO; identification methods; system identification; linear/nonlinear models             
\end{keyword}                             

\begin{abstract}                          
This article introduces a tensor network subspace algorithm for the identification of specific polynomial state space models. The polynomial nonlinearity in the state space model is completely written in terms of a tensor network, thus avoiding the curse of dimensionality. We also prove how the block Hankel data matrices in the subspace method can be exactly represented by low rank tensor networks, reducing the computational and storage complexity significantly. The performance and accuracy of our subspace identification algorithm are illustrated by numerical experiments, showing that our tensor network implementation is around 20 times faster than the standard matrix implementation before the latter fails due to insufficient memory, is robust with respect to noise and can model real-world systems.  
\end{abstract}

\end{frontmatter}

\section{Introduction}
Linear time-invariant (LTI) systems~\cite{Kailath} are a very useful framework for describing dynamical systems and have consequently been applied in myriad domains. Parametric system identification deals with the estimation of parameters for a given model structure from a set of measured vector input-output pairs $(\mat{u}_0,\mat{y}_0),\ldots,(\mat{u}_{L-1},\mat{y}_{L-1})$ and has been thoroughly studied in the 1980's and 1990's. Two important model structures for LTI systems are transfer function models and state space models, which can be converted into one another. The dominant framework for the estimation of transfer function models are prediction error and instrumental variables methods~\cite{ljung1999system,soderstrom1988system}, while state space models are typically estimated through subspace methods~\cite{katayama2005subspace,van2012subspace}. Prediction error methods are iterative methods that estimate the transfer function parameters such that the resulting prediction errors are minimized. These iterative methods suffer from some disadvantages such as no guaranteed convergence, sensitivity of the result on initial estimates and getting stuck in a local minimum of the objective function. Subspace methods, on the other hand, are non-iterative methods that rely on numerical linear algebra operations such as the singular value decomposition (SVD) or QR decomposition~\cite{matrixcomputations} of particular block Hankel data matrices. Estimates found through subspace methods are often good initial guesses for the iterative prediction error methods.

The most general nonlinear extension of the discrete-time linear state space model is
\begin{align*}
\mat{x}_{t+1} = f(\mat{x}_t,\mat{u}_t),\\
\mat{y}_{t} = g(\mat{x}_t,\mat{u}_t),
\end{align*}
where $\mat{x}_t,\mat{u}_t,\mat{y}_t$ are the state, input and output vectors at time $t$, respectively and $f(\cdot),g(\cdot)$ are nonlinear vector functions. By choosing different nonlinear functions $f(\cdot),g(\cdot)$ one effectively ends up with very different nonlinear state space models. A popular choice for the nonlinear functions are multivariate polynomials. The most general polynomial state space model, where $f(\cdot),g(\cdot)$ are multivariate polynomials in both the state and the input, is described in~\cite{PADUART2010647}. Being the most general form implies that it has a large expressive power, enabling the description of many different kinds of dynamics. This, however, comes at the cost of having to estimate an exponentially growing number of parameters as the degree of the polynomials increases. Furthermore, the identification method relies on solving a highly nonlinear optimization problem using iterative methods.

In~\cite{kruppa2014multilinear}, multilinear time invariant (MTI) systems are proposed. MTI systems are systems for which $f(\cdot),g(\cdot)$ are multivariate polynomial functions where each state or input variable is limited to a maximal degree of one. The number of parameters of an $m$-input-$p$-output MTI system with $n$ states is then $(n+p)2^{(n+m)}$, growing exponentially with the number of state and input variables. This curse of dimensionality is then effectively lifted with tensor methods. In this article, we propose the following polynomial state space model
\begin{align}
\label{eqn:model}
\nonumber \mat{x}_{t+1} = \mat{A}\,\mat{x}_t + f(\mat{u}_t),\\
\mat{y}_{t} = \mat{C}\,\mat{x}_t+ g(\mat{u}_t),
\end{align}
where both $f(\cdot),g(\cdot)$ are multivariate polynomials of total degree $d$. We then show that it is possible to identify these models using conventional subspace methods. The number of parameters that need to be estimated, however, will also grow exponentially with $d$. We then propose to use tensor networks to represent the polynomials $f(\cdot),g(\cdot)$ and modify the conventional MOESP subspace method to work for tensor networks. The main contributions of this article are:
\begin {enumerate}
\item We extend linear time-invariant state space models to a specific class of polynomial state space models with a linear state sequence and a polynomial input relation.
\item We modify the MOESP method~\cite{moesp1,moesp2} to utilize tensor networks for the identification of the proposed polynomial state space model. 
\item We prove in Theorem~\ref{theo:rankUTN} that the block Hankel data matrices in subspace methods are exactly represented by low-rank tensor networks, thereby reducing the computational and storage complexity significantly.
\end{enumerate}

The main outline of this article is as follows. First, we briefly discuss some tensor network preliminaries in Section~\ref{sec:prelim}. The proposed polynomial state space model is discussed in detail in Section~\ref{sec:pss}. The development and implementation of our proposed tensor network subspace identification method is described in Section~\ref{sec:tnmoesp}. The algorithm to simulate our proposed polynomial state space model in tensor network form is given in Section~\ref{sec:sim}. Numerical experiments validate and demonstrate the efficacy of our tensor network subspace identification method in Section~\ref{sec:experiments}. All our algorithms were implemented in the MATLAB/Octave TNMOESP package and can be freely downloaded from \url{https://github.com/kbatseli/TNMOESP}. Finally, some conclusions and future work are formulated in Section~\ref{sec:conclusions}.

\section{Preliminaries}
\label{sec:prelim}
Most of the notation on subspace methods is adopted from~\cite{katayama2005subspace} and the notation on tensors from~~\cite{MVMALS,TNKalman} is also used. Tensors are multi-dimensional arrays that generalize the notions of vectors and matrices to higher orders. A $d$-way or $d$th-order tensor is denoted $\ten{A} \in \mathbb{R}^{n_1 \times  n_2 \times \cdots \times n_d}$ and hence each of its entries is determined by $d$ indices $i_1,\ldots,i_d$. We use the MATLAB convention that indices  start from 1, such that $1 \leq i_k \leq n_k\, (k=1,\ldots,d)$. The numbers $n_1,n_2,\ldots,n_d$ are called the dimensions of the tensor. For practical purposes, only real tensors are considered. We use boldface capital calligraphic letters $\ten{A},\ten{B},\ldots$ to denote tensors, boldface capital letters $\mat{A},\mat{B},\ldots$ to denote matrices, boldface letters $\mat{a},\mat{b},\ldots$ to denote vectors, and Roman letters $a,b,\ldots$ to denote scalars. The elements of a set of $d$ tensors, in particular in the context of tensor networks, are denoted $\ten{A}^{(1)},\ten{A}^{(2)},\ldots,\ten{A}^{(d)}$. The transpose of a matrix $\mat{A}$ or vector $\mat{a}$ are denoted $\mat{A}^T$ and $\mat{a}^T$, respectively. The unit matrix of order $n$ is denoted $\mat{I}_n$. A matrix with all zero entries is denoted $\mat{O}$.

A very useful graphical representation of scalars, vectors, matrices and tensors is shown in Figure~\ref{fig:TN}. The number of unconnected edges of each node represents the order of the corresponding tensor. Scalars therefore are represented by nodes without any edges, while a matrix is represented by a node that has two edges. This graphical representation allows us to visualize the different tensor networks and operations in this article in a very straightforward way. We also adopt the MATLAB notation regarding entries of tensors, e.g. $\mat{A}(:,1)$ denotes the first column of the matrix $\mat{A}$. 
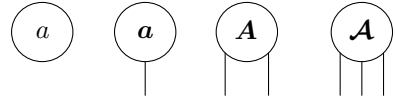
\begin{figure}[tb]
\begin{center}
\ifx\du\undefined
  \newlength{\du}
\fi
\setlength{\du}{4\unitlength}
\begin{tikzpicture}
\pgftransformxscale{1.000000}
\pgftransformyscale{-1.000000}
\definecolor{dialinecolor}{rgb}{0.000000, 0.000000, 0.000000}
\pgfsetstrokecolor{dialinecolor}
\definecolor{dialinecolor}{rgb}{1.000000, 1.000000, 1.000000}
\pgfsetfillcolor{dialinecolor}
\definecolor{dialinecolor}{rgb}{1.000000, 1.000000, 1.000000}
\pgfsetfillcolor{dialinecolor}
\pgfpathellipse{\pgfpoint{-6.073446\du}{10.779091\du}}{\pgfpoint{2.900000\du}{0\du}}{\pgfpoint{0\du}{2.800000\du}}f
\pgfusepath{fill}
\pgfsetlinewidth{0.100000\du}
\pgfsetdash{}{0pt}
\pgfsetdash{}{0pt}
\definecolor{dialinecolor}{rgb}{0.000000, 0.000000, 0.000000}
\pgfsetstrokecolor{dialinecolor}
\pgfpathellipse{\pgfpoint{-6.073446\du}{10.779091\du}}{\pgfpoint{2.900000\du}{0\du}}{\pgfpoint{0\du}{2.800000\du}}
\pgfusepath{stroke}
\pgfsetlinewidth{0.100000\du}
\pgfsetdash{}{0pt}
\pgfsetdash{}{0pt}
\pgfsetbuttcap
{
\definecolor{dialinecolor}{rgb}{0.000000, 0.000000, 0.000000}
\pgfsetfillcolor{dialinecolor}
\definecolor{dialinecolor}{rgb}{0.000000, 0.000000, 0.000000}
\pgfsetstrokecolor{dialinecolor}
\draw (-4.022836\du,12.758990\du)--(-4.022836\du,16.80\du);
}
\pgfsetlinewidth{0.100000\du}
\pgfsetdash{}{0pt}
\pgfsetdash{}{0pt}
\pgfsetbuttcap
{
\definecolor{dialinecolor}{rgb}{0.000000, 0.000000, 0.000000}
\pgfsetfillcolor{dialinecolor}
\definecolor{dialinecolor}{rgb}{0.000000, 0.000000, 0.000000}
\pgfsetstrokecolor{dialinecolor}
\draw (-8.124055\du,12.758990\du)--(-8.124055\du,16.80\du);
}
\pgfsetlinewidth{0.100000\du}
\pgfsetdash{}{0pt}
\pgfsetdash{}{0pt}
\pgfsetbuttcap
{
\definecolor{dialinecolor}{rgb}{0.000000, 0.000000, 0.000000}
\pgfsetfillcolor{dialinecolor}
\definecolor{dialinecolor}{rgb}{0.000000, 0.000000, 0.000000}
\pgfsetstrokecolor{dialinecolor}
\draw (-6.073446\du,13.579091\du)--(-6.073446\du,16.80\du);
}
\definecolor{dialinecolor}{rgb}{0.000000, 0.000000, 0.000000}
\pgfsetstrokecolor{dialinecolor}
\node[anchor=west] at (-8.20\du,10.6\du){$\ten{A}$};
\definecolor{dialinecolor}{rgb}{1.000000, 1.000000, 1.000000}
\pgfsetfillcolor{dialinecolor}
\pgfpathellipse{\pgfpoint{-36.297515\du}{10.779091\du}}{\pgfpoint{2.900000\du}{0\du}}{\pgfpoint{0\du}{2.800000\du}}
\pgfusepath{fill}
\pgfsetlinewidth{0.100000\du}
\pgfsetdash{}{0pt}
\pgfsetdash{}{0pt}
\definecolor{dialinecolor}{rgb}{0.000000, 0.000000, 0.000000}
\pgfsetstrokecolor{dialinecolor}
\pgfpathellipse{\pgfpoint{-36.297515\du}{10.779091\du}}{\pgfpoint{2.900000\du}{0\du}}{\pgfpoint{0\du}{2.800000\du}}
\pgfusepath{stroke}
\definecolor{dialinecolor}{rgb}{0.000000, 0.000000, 0.000000}
\pgfsetstrokecolor{dialinecolor}
\node[anchor=west] at (-37.80\du,10.80\du){$a$};
\definecolor{dialinecolor}{rgb}{1.000000, 1.000000, 1.000000}
\pgfsetfillcolor{dialinecolor}
\pgfpathellipse{\pgfpoint{-26.560105\du}{10.779091\du}}{\pgfpoint{2.900000\du}{0\du}}{\pgfpoint{0\du}{2.800000\du}}
\pgfusepath{fill}
\pgfsetlinewidth{0.100000\du}
\pgfsetdash{}{0pt}
\pgfsetdash{}{0pt}
\definecolor{dialinecolor}{rgb}{0.000000, 0.000000, 0.000000}
\pgfsetstrokecolor{dialinecolor}
\pgfpathellipse{\pgfpoint{-26.560105\du}{10.779091\du}}{\pgfpoint{2.900000\du}{0\du}}{\pgfpoint{0\du}{2.800000\du}}
\pgfusepath{stroke}
\pgfsetlinewidth{0.100000\du}
\pgfsetdash{}{0pt}
\pgfsetdash{}{0pt}
\pgfsetbuttcap
{
\definecolor{dialinecolor}{rgb}{0.000000, 0.000000, 0.000000}
\pgfsetfillcolor{dialinecolor}
\definecolor{dialinecolor}{rgb}{0.000000, 0.000000, 0.000000}
\pgfsetstrokecolor{dialinecolor}
\draw (-26.560105\du,13.579091\du)--(-26.560105\du,16.80\du);
}
\definecolor{dialinecolor}{rgb}{0.000000, 0.000000, 0.000000}
\pgfsetstrokecolor{dialinecolor}
\node[anchor=west] at (-28.20\du,10.80\du){$\mat{a}$};
\definecolor{dialinecolor}{rgb}{1.000000, 1.000000, 1.000000}
\pgfsetfillcolor{dialinecolor}
\pgfpathellipse{\pgfpoint{-16.948419\du}{10.779091\du}}{\pgfpoint{2.900000\du}{0\du}}{\pgfpoint{0\du}{2.800000\du}}
\pgfusepath{fill}
\pgfsetlinewidth{0.100000\du}
\pgfsetdash{}{0pt}
\pgfsetdash{}{0pt}
\definecolor{dialinecolor}{rgb}{0.000000, 0.000000, 0.000000}
\pgfsetstrokecolor{dialinecolor}
\pgfpathellipse{\pgfpoint{-16.948419\du}{10.779091\du}}{\pgfpoint{2.900000\du}{0\du}}{\pgfpoint{0\du}{2.800000\du}}
\pgfusepath{stroke}
\pgfsetlinewidth{0.100000\du}
\pgfsetdash{}{0pt}
\pgfsetdash{}{0pt}
\pgfsetbuttcap
{
\definecolor{dialinecolor}{rgb}{0.000000, 0.000000, 0.000000}
\pgfsetfillcolor{dialinecolor}
\definecolor{dialinecolor}{rgb}{0.000000, 0.000000, 0.000000}
\pgfsetstrokecolor{dialinecolor}
\draw (-14.897810\du,12.758990\du)--(-14.897810\du,16.80\du);
}
\pgfsetlinewidth{0.100000\du}
\pgfsetdash{}{0pt}
\pgfsetdash{}{0pt}
\pgfsetbuttcap
{
\definecolor{dialinecolor}{rgb}{0.000000, 0.000000, 0.000000}
\pgfsetfillcolor{dialinecolor}
\definecolor{dialinecolor}{rgb}{0.000000, 0.000000, 0.000000}
\pgfsetstrokecolor{dialinecolor}
\draw (-18.999029\du,12.758990\du)--(-18.999029\du,16.80\du);
}
\definecolor{dialinecolor}{rgb}{0.000000, 0.000000, 0.000000}
\pgfsetstrokecolor{dialinecolor}
\node[anchor=west] at (-19.00\du,10.6\du){$\mat{A}$};
\end{tikzpicture}
\caption{Graphical depiction of a scalar $a$, vector $\mat{a}$, matrix $\mat{A}$ and 3-way tensor $\ten{A}$.}
\label{fig:TN}
\end{center}
\end{figure}
We now give a brief description of some required tensor operations. The generalization of the matrix-matrix multiplication to tensors involves a multiplication of a matrix with a $d$-way tensor along one of its $d$ possible modes.
\begin{definition}(\cite[p.~460]{tensorreview})
The $k$-mode product of a tensor $\ten{A}\in\mathbb{R}^{n_1\times\cdots \times n_k\times\cdots\times n_d}$ with a matrix $\bm{U}\in\mathbb{R}^{p_k\times n_k}$ is denoted \mbox{$\ten{B}=\ten{A}\, {\times_k}\, \bm{U}$} and defined by%
\begin{align}
\nonumber \ten{B}(i_1,\cdots,i_{k-1},j,i_{k+1}, \cdots, i_d) &= \hfill \\
\sum\limits_{i_k=1}^{n_k}  \mat{U}(j,i_k) \mat{A}(i_1,\cdots,i_{k-1},i_k,&i_{k+1},\cdots,i_d),%
\label{eqn:kmode}
\end{align}%
with $\ten{B}\in\mathbb{R}^{n_1\times\cdots \times n_{k-1}\times p_k\times n_{k+1}\times\cdots\times n_d}$.
\end{definition}
For a $(d+1)$-way tensor $\ten{A}  \in \mathbb{R}^{n \times  m \times \cdots \times m}$ and vector $\mat{x} \in \mathbb{R}^{m}$, we define the short hand notation for the vector
\begin{align*}
\ten{A}\,\mat{x}^d &:= \ten{A} \times_2 \mat{x}^T \times_3 \cdots \times_{d+1} \mat{x}^T \; \in \mathbb{R}^n.
\end{align*}
The Kronecker product will be repeatedly used to describe the polynomial nonlinearity.
\begin{definition}{\textbf{(Kronecker product)}}
If $\mat{B} \in \mathbb{R}^{m_1 \times m_2}$ and~$\mat{C} \in \mathbb{R}^{n_1 \times n_2}$, then their Kronecker product $\mat{B} \otimes \mat{C}$ is the $m_1n_1 \times m_2n_2$ matrix
\begin{equation}
\begin{pmatrix}
b_{11} & \cdots & b_{1m_2}\\
\vdots & \ddots & \vdots \\
b_{m_11} & \cdots &b_{m_1m_2}\\
\end{pmatrix} \otimes \mat{C} \;=\;
\begin{pmatrix}
b_{11}\mat{C} & \cdots & b_{1m_2}\mat{C}\\
\vdots & \ddots & \vdots \\
b_{m_11}\mat{C} & \cdots & b_{m_1m_2}\mat{C}\\
\end{pmatrix}.
\label{def:kron}
\end{equation}
\end{definition}
\begin{definition}
The Khatri-Rao product $\mat{A} \odot \mat{B}$ between $\mat{A} \in \mathbb{R}^{n_1 \times p}$ and $\mat{B} \in \mathbb{R}^{n_2 \times p}$ is the matrix $\mat{C} \in \mathbb{R}^{n_1n_2 \times p}$ with
\begin{align*}
\mat{C}(:,k) &= \mat{A}(:,k) \otimes \mat{B}(:,k), \, (k=1,\ldots,p).
\end{align*}
\end{definition}
Another common operation on tensors that we will use throughout this article is reshaping.
\begin{definition} We adopt the MATLAB/Octave reshape operator ``reshape($\ten{A},[n_1,n_2,n_3 \cdots])$", which reshapes the $d$-way tensor $\ten{A}$ into a tensor with dimensions $n_1 \times n_2 \times \cdots \times n_d$. The total number of elements of $\ten{A}$ must be the same as~$n_1\times n_2 \times \cdots \times n_d$.
\end{definition}
Storing all entries of a $d$-way tensor with dimension size $n$ requires $n^d$ storage units and quickly becomes prohibitively large for increasing values of $n$ and $d$. When the data in the tensor have redundancies, then more economic ways of storing the tensor exist in the form of tensor decompositions. The tensor decomposition used throughout this article is a particular tensor network called matrix product operators, also called tensor train matrices in the applied math community~\cite{ivanTT,TNorus,Oseledets2010}. Suppose we have an $n \times m$ matrix $\mat{A}$, where the row index $i$ can be written as a grouped index $[i_1i_2\cdots i_d]$ such that
\begin{align*}
i &= i_1 + (i_2-1)\,n_1 + \cdots + (i_d-1)\,\prod_{j=1}^{d-1}n_j,
\end{align*}
and likewise for the column index $j=[j_1j_2\cdots j_d]$. This implies that
\begin{align*}
1 &\leq i_k \leq n_k, \;  1 \leq j_k \leq m_k \,(k=1,\ldots,d),
\end{align*}
and $n=n_1n_2\cdots n_d$, $m=m_1m_2\cdots m_d$. A matrix product operator is then a representation of the matrix $\mat{A}$ as a set of 4-way tensors $\ten{A}^{(k)} \in \mathbb{R}^{r_k \times i_k \times j_k \times r_{k+1}} (k=1,\ldots,d)$ such that each matrix entry $\mat{A}([i_1i_2\cdots i_d],[j_1j_2\cdots j_d])$ is per definition
{\scriptsize
\begin{align}
\label{eqn:mpodef}
\sum_{k_2,\ldots,k_d}&\ten{A}^{(1)}(1,i_1,j_2,k_2)\ten{A}^{(2)}(k_2,i_2,j_2,k_3)\cdots \ten{A}^{(d)}(k_d,i_d,j_d,1).
\end{align}}
Note that $r_1=r_{d+1}=1$ is required in order for the summation~\eqref{eqn:mpodef} to result in a scalar. A graphical representation of the matrix product structure is shown in Figure~\ref{fig:mpo} for $d=4$. The fully connected edges represent the auxiliary indices $k_2,\ldots,k_d$ that are summed over. Figure~\ref{fig:mpo} illustrates the power of this particular visual representation by replacing the complicated summation in~\eqref{eqn:mpodef} with a simple graph. The canonical tensor network ranks $r_1,r_2,\ldots,r_{d+1}$ are defined as the minimal values such that the summation in~\eqref{eqn:mpodef} is exactly equal to $\mat{A}([i_1i_2\cdots i_d],[j_1j_2\cdots j_d])$. An upper bound on the canonical rank $r_k$ for a matrix product operator of $d$ tensors for which $r_1=r_{d+1}=1$ is given by the following theorem.
\begin{theorem}(Modified version of Theorem 2.1 in \cite{ttcross})
\label{theo:MPTranks}
For any matrix $\mat{A} \in \mathbb{R}^{n_1n_2\cdots n_d \times m_1m_2\cdots m_d }$ there exists a matrix product operator with ranks $r_1=r_{d+1}=1$ such that the canonical ranks $r_k$ satisfy
\begin{align*}
r_k \leq \textrm{min}\,\left( \prod_{i=1}^{k-1} n_im_i, \prod_{i=k}^d n_im_i \right) \textrm{ for } k=2,\ldots,d.
\end{align*}
\end{theorem}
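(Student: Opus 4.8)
The plan is to reduce the statement to the well-known existence of a tensor-train decomposition of a single tensor and then to bound its ranks by the sizes of its unfoldings. First I would reinterpret the matrix $\mat{A}$ as a $d$-way tensor $\ten{T} \in \mathbb{R}^{(n_1m_1) \times (n_2m_2) \times \cdots \times (n_dm_d)}$ by collecting, for each $k$, the index pair $(i_k,j_k)$ into a single mode of size $n_km_k$; concretely $\ten{T}$ is obtained from $\mat{A}$ by a reshape followed by a permutation of modes. With this identification a matrix product operator for $\mat{A}$ in the sense of \eqref{eqn:mpodef} is precisely a tensor-train representation of $\ten{T}$, and the two notions of rank coincide. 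Hence it suffices to exhibit a tensor-train representation of $\ten{T}$ whose $k$-th rank $r_k$ satisfies $r_k \le \min\!\left(\prod_{i=1}^{k-1}n_im_i,\ \prod_{i=k}^{d}n_im_i\right)$.

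Next I would construct such a representation by the standard sweep of rank-revealing factorizations (thin SVDs, say). Unfold $\ten{T}$ into the matrix $\mat{T}_1$ of size $(n_1m_1)\times(n_2m_2\cdots n_dm_d)$, write a rank factorization $\mat{T}_1 = \mat{U}_1\,\mat{W}_1$ with $\mat{U}_1 \in \mathbb{R}^{(n_1m_1)\times r_2}$ and $r_2 = \textrm{rank}(\mat{T}_1)$, and reshape $\mat{U}_1$ into the first core $\ten{A}^{(1)} \in \mathbb{R}^{1\times n_1\times m_1\times r_2}$. Reshaping $\mat{W}_1$ into a matrix of size $(r_2 n_2m_2)\times(n_3m_3\cdots n_dm_d)$ and repeating the factorization peels off the second core $\ten{A}^{(2)} \in \mathbb{R}^{r_2\times n_2\times m_2\times r_3}$, and so on; after $d-1$ steps the remaining matrix is the last core $\ten{A}^{(d)} \in \mathbb{R}^{r_d\times n_d\times m_d\times 1}$. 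By construction the contraction of these cores reproduces $\ten{T}$, hence $\mat{A}$, so \eqref{eqn:mpodef} holds with $r_1=r_{d+1}=1$.

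It then remains to bound the ranks produced by this sweep, which is pure bookkeeping. At step $k$ the matrix being factored has $r_k\,n_km_k$ rows and $\prod_{i=k+1}^{d}n_im_i$ columns, so $r_{k+1}$ is at most each of these two numbers. The column estimate immediately gives $r_k \le \prod_{i=k}^{d} n_im_i$. For the row estimate I would induct on $k$: one has $r_1 = 1$, and $r_{k+1} \le r_k\, n_k m_k \le \prod_{i=1}^{k} n_i m_i$ by the induction hypothesis. Combining the two bounds yields the claimed inequality for every $k=2,\ldots,d$.

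The only real subtlety --- and the step I would be most careful about --- is the index bookkeeping in the first paragraph: one must check that the grouped-index convention for the rows and columns of $\mat{A}$ in \eqref{eqn:mpodef}, together with the particular reshape and permutation defining $\ten{T}$, makes the sweep of factorizations assemble into genuine $4$-way cores $\ten{A}^{(k)} \in \mathbb{R}^{r_k\times n_k\times m_k\times r_{k+1}}$ whose contraction is exactly the sum in \eqref{eqn:mpodef}. Once the reshapes are fixed consistently this is automatic and does not affect the rank count above; this is also why the bound is stated as the crude minimum of the two partial products rather than the sharper unfolding-rank bound of Theorem 2.1 in \cite{ttcross}.
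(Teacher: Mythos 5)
Your proof is correct: grouping each index pair $(i_k,j_k)$ into a single mode of size $n_km_k$ and running the standard TT-SVD sweep, bounding each new rank by the row and column counts of the matrix being factored, is precisely the argument underlying Theorem 2.1 of \cite{ttcross}, which the paper invokes by citation without reproducing a proof. Your bookkeeping (column bound $r_k\le\prod_{i=k}^{d}n_im_i$, inductive row bound $r_k\le\prod_{i=1}^{k-1}n_im_i$) yields exactly the stated crude bound, so the proposal matches the intended approach and needs no changes.
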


Note that using a matrix product operator structure can reduce the storage cost of a square $n^d \times n^d$ matrix from $n^{2d}$ down to approximately $dn^2r^2$, where $r$ is the maximal tensor network rank.
\begin{figure}[tb]
\begin{center}
\input{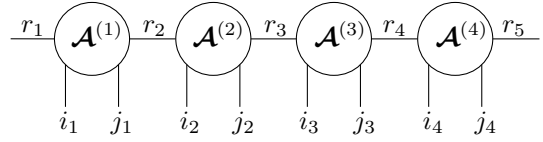}
\caption{Graphical depiction of a tensor network that consists of four 4-way tensors $\ten{A}^{(1)},\ldots,\ten{A}^{(4)}$.}
\label{fig:mpo}
\end{center}
\end{figure}

\section{Polynomial state space model}
\label{sec:pss}
\subsection{The model}
We rewrite our proposed polynomial state space model~\eqref{eqn:model} in terms of tensors as
\begin{align}
\label{eqn:pss}
\nonumber \mat{x}_{t+1} &= \mat{A}\,\mat{x}_{t} + \ten{B}\, \mat{u}_t^d,\\
\mat{y}_t &= \mat{C}\,\mat{x}_{t} + \ten{D}\, \mat{u}_t^d,
\end{align}
where $\mat{x}_t \in \mathbb{R}^{n}$ is the state vector and $\mat{y}_t \in \mathbb{R}^{p}, \mat{u}_t \in \mathbb{R}^{m}$ are the output and input vectors, respectively. The matrices $\mat{A} \in \mathbb{R}^{n\times n}$ and $\mat{C} \in \mathbb{R}^{p \times n}$ are the regular state transition and output model matrices from LTI systems. The main difference between the model~\eqref{eqn:pss} and LTI systems are the $\ten{B}\, \mat{u}_t^d$ and $\ten{D}\, \mat{u}_t^d$ terms, where $\ten{B}$ is a $(d+1)$- way tensor with dimensions $n \times m \times m \times \cdots \times m$ and $\ten{D}$ is a $(d+1)$-way tensor with dimensions $p \times m \times m \times \cdots \times m$. The input vector $\mat{u}_t$ is defined as
\begin{align*}
\mat{u}_t := \begin{pmatrix}1 & u_t^{(1)} & u_t^{(2)} & \cdots & u_t^{(m-1)}\end{pmatrix}^T,
\end{align*}
which implies that there are $m-1$ measured input signals. An alternative way to write~\eqref{eqn:pss} is
\begin{align}
\label{eqn:pssmatricized}
\nonumber \mat{x}_{t+1} &= \mat{A}\,\mat{x}_{t} + \mat{B}\, \mat{u}_t\kpr{d},\\
\mat{y}_t &= \mat{C}\,\mat{x}_{t} + \mat{D}\, \mat{u}_t\kpr{d},
\end{align}
where $\mat{B},\mat{D}$ are the tensors $\ten{B},\ten{D}$ reshaped into $n \times m^d$ and $p \times m^d$ matrices, respectively, and $\mat{u}_t\kpr{d}$ is defined as the $d$-times repeated left Kronecker product
\begin{align}
\mat{u}_t\kpr{d} := \overbrace { {\mat{u}_t\otimes \mat{u}_t\otimes \cdots \otimes \mat{u}_t}}^{d}\; \in\; \mathbb{R}^{m^d}.
\end{align}
Each row of $\mat{B},\mat{D}$ can therefore be interpreted as containing the coefficients of a multivariate polynomial of total degree $d$ in the $m-1$ inputs $u_t^{(1)}, u_t^{(2)}, \ldots , u_t^{(m-1)}$. Note that the affine terms, also called the constant terms, of both $\mat{B}\mat{u}_{i}\kpr{d}$ and $\mat{D}\mat{u}_{i}\kpr{d}$ are defined to be exactly zero. 
\subsection{Internal stability and persistence of excitation}
Repeated application of~\eqref{eqn:pssmatricized} for $t=0,\ldots,t-1$ allows us to write
\begin{align}
\label{eqn:singleoutput}
\mat{y}_t &= \mat{C}\mat{A}^t\,\mat{x}_0 + \sum_{i=0}^{t-1}\,\mat{C}\mat{A}^{t-1-i}\mat{B}\mat{u}_{i}\kpr{d}+\mat{D}\mat{u}_{t}\kpr{d}.
\end{align}
The current output $\mat{y}_t$ at time $t$ is therefore completely determined by the initial state $\mat{x}_0$ and all input signals $\mat{u}_0,\ldots,\mat{u}_t$. Since the state sequence is linear when a zero input is applied, the condition for internal stability of~\eqref{eqn:pss} is identical to LTI systems.
\begin{lemma}
The polynomial state space model~\eqref{eqn:pss} is internally stable when all eigenvalues of $\mat{A}$ satisfy $|\lambda_i| < 1$, $i=1,\ldots,n$.
\end{lemma}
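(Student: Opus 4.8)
The plan is to reduce the statement to the classical asymptotic-stability criterion for linear time-invariant systems. By internal stability of~\eqref{eqn:pss} we mean that the zero-input state response converges to the origin: applying the zero measured input $u_t^{(1)}=\cdots=u_t^{(m-1)}=0$ for all $t$ and starting from an arbitrary $\mat{x}_0$, we must show $\mat{x}_t\to\mat{0}$ as $t\to\infty$.

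First I would note that under the zero measured input the augmented input vector is $\mat{u}_t=(1,0,\ldots,0)^T$, so its $d$-fold Kronecker power $\mat{u}_t\kpr{d}$ equals the first canonical basis vector of $\mathbb{R}^{m^d}$. Consequently $\mat{B}\,\mat{u}_t\kpr{d}$ is just the first column of $\mat{B}$, i.e.\ the coefficient vector of the constant (affine) term of the polynomial $f$; by the construction in Section~\ref{sec:pss} this term is identically zero, so $\mat{B}\,\mat{u}_t\kpr{d}=\mat{0}$ for every $t$. Hence the state recursion in~\eqref{eqn:pssmatricized} collapses to the autonomous linear system $\mat{x}_{t+1}=\mat{A}\,\mat{x}_t$ (equivalently, every input-dependent term in~\eqref{eqn:singleoutput} vanishes), and iterating gives $\mat{x}_t=\mat{A}^t\mat{x}_0$.

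It then remains to invoke the standard fact that $\mat{A}^t\to\mat{O}$ as $t\to\infty$ if and only if the spectral radius of $\mat{A}$ is strictly less than one, i.e.\ $|\lambda_i|<1$ for all eigenvalues $\lambda_i$ of $\mat{A}$. This follows by passing to the Jordan canonical form $\mat{A}=\mat{T}\mat{J}\mat{T}^{-1}$: the $t$th power of a Jordan block with eigenvalue $\lambda$ has entries that are polynomials in $t$ multiplied by powers $\lambda^{t-k}$, which tend to zero exactly when $|\lambda|<1$; see e.g.~\cite{Kailath,matrixcomputations}. Under the stated hypothesis this yields $\mat{x}_t=\mat{A}^t\mat{x}_0\to\mat{0}$ for every $\mat{x}_0$, which is precisely internal stability.

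The argument presents no genuine difficulty; the only step needing care is the reduction in the second paragraph — verifying that the normalisation $\mat{u}_t(1)=1$ together with the vanishing of the constant terms of $f$ really does eliminate the forcing term under zero input, so that the zero-input dynamics are exactly the linear dynamics $\mat{x}_{t+1}=\mat{A}\,\mat{x}_t$. Once that is in place the lemma is literally the LTI stability criterion.
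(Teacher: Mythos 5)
Your proposal is correct and follows the same route the paper takes: the paper simply remarks that the state sequence is linear under zero input (because the affine terms of $\mat{B}\,\mat{u}_t\kpr{d}$ are defined to be zero), so internal stability reduces to the standard LTI eigenvalue condition. You merely spell out the details — the augmented input becoming the first canonical basis vector, the vanishing forcing term, and the Jordan-form argument — which the paper leaves implicit.
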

A stable polynomial state space model then implies that the transient part $\mat{C}\mat{A}^t\,\mat{x}_0$ will have a progressively smaller contribution to $\mat{y}_t$ as $t$ increases. Each term of the sum $\sum_{i=0}^{t-1}\,\mat{C}\mat{A}^{t-1-i}\mat{B}\mat{u}_{i}\kpr{d}$ can be interpreted as an $m$-variate polynomial of total degree $d$ in $\mat{u}_i$. Writing out~\eqref{eqn:singleoutput} for $t=0,\ldots,k-1$ we obtain
\begin{align}
\label{eqn:outputcol}
\mat{y}_{0|k-1} &=    
\mat{O}_k \; \mat{x}_0 +  \mat{P}_k \; \mat{u}_{0|k-1},
\end{align}
where
\begin{align*}
\mat{y}_{0|k-1} := \begin{pmatrix} \mat{y}_0 \\ \mat{y}_1 \\ \mat{y}_2 \\ \vdots \\ \mat{y}_{k-1} \end{pmatrix} \in \mathbb{R}^{kp}, \; \mat{u}_{0|k-1}:=\begin{pmatrix} \mat{u}_0\kpr{d} \\ \mat{u}_1\kpr{d}\\ \mat{u}_2\kpr{d} \\ \vdots \\ \mat{u}_{k-1}\kpr{d} \end{pmatrix} \in \mathbb{R}^{km^d},
\end{align*}
and
\begin{align}
\mat{O}_{k} := \begin{pmatrix} \mat{C} \\ \mat{CA} \\ \mat{C}\mat{A}^2   \\ \vdots \\ \mat{C}\mat{A}^{k-1}   \end{pmatrix}  \in \mathbb{R}^{kp \times n}
\end{align}
is the well-known extended observability matrix of linear time-invariant systems and
\begin{align}
\mat{P}_k := \begin{pmatrix} \mat{D} &    &   &   &   \\ 
\mat{CB} & \mat{D} &   &   \\
\vdots  &  \ddots & \ddots &  \\
\mat{C}\mat{A}^{k-2}\mat{B} & \cdots  & \mat{CB}  & \mat{D}  \\
\end{pmatrix} \in  \mathbb{R}^{kp \times km^d}
\end{align}
is a block Toeplitz matrix with an exponential number of columns. Following the idea described in~\cite{moonensubspace}, we write~\eqref{eqn:outputcol} in terms of block Hankel data matrices
\begin{align*}
\mat{Y}_{0|k-1} &:= 
\begin{pmatrix}
\mat{y}_0  & \mat{y}_1  & \cdots &\mat{y}_{N-1}\\
\mat{y}_1  & \mat{y}_2  & \cdots & \mat{y}_{N}\\
\vdots & \vdots &  & \vdots \\
\mat{y}_{k-1}  & \mat{y}_k  & \cdots & \mat{y}_{N+k-2}
\end{pmatrix} \in \mathbb{R}^{kp \times N},\\
\mat{U}_{0|k-1}&:= 
\begin{pmatrix}
\mat{u}_0\kpr{d}  & \mat{u}_1\kpr{d}  & \cdots &\mat{u}_{N-1}\kpr{d}\\
\mat{u}_1\kpr{d}  & \mat{u}_2\kpr{d}  & \cdots &\mat{u}_{N}\kpr{d}\\
\vdots & \vdots &  & \vdots \\
\mat{u}_{k-1}\kpr{d}  & \mat{u}_k\kpr{d}  & \cdots &\mat{u}_{N+k-2}\kpr{d}
\end{pmatrix} \in \mathbb{R}^{km^d \times N}
\end{align*}
as
\begin{align}
\label{eqn:blockhankel}
\mat{Y}_{0|k-1} &=    
\mat{O}_k \; \mat{X} +  \mat{P}_k \; \mat{U}_{0|k-1},
\end{align}
where $\mat{X}:=\begin{pmatrix}\mat{x}_0& \mat{x}_1 & \cdots & \mat{x}_{N-1}\end{pmatrix} \in \mathbb{R}^{n\times N}$ is the state sequence matrix and $N$ is sufficiently large. Equation~\eqref{eqn:blockhankel} lies at the heart of subspace identification methods. We now introduce the notion of\textit{ persistence of excitation} for our polynomial state space model.
\begin{theorem}
\label{theo:rankU}
The rank of the block Hankel $km^d \times N$ matrix $\mat{U}_{0|k-1}$ is upper bounded by $r:=k{d+m-1 \choose m-1}-k+1$.
\end{theorem}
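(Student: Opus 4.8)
The plan is to exploit the fact that, although each block $\mat{u}_t\kpr{d}$ lives in $\mathbb{R}^{m^d}$, its entries are massively redundant. Because the first entry of $\mat{u}_t$ equals $1$, every entry of the Kronecker power $\mat{u}_t\kpr{d}$ is a monomial of total degree \emph{at most} $d$ in the $m-1$ measured inputs $u_t^{(1)},\ldots,u_t^{(m-1)}$: a Kronecker multi-index $(i_1,\ldots,i_d)\in\{0,1,\ldots,m-1\}^d$ maps to the monomial obtained by deleting the zero indices. The number of such monomials is $q:=\binom{d+m-1}{m-1}$, so there is a fixed $0$--$1$ duplication matrix $\mat{T}\in\mathbb{R}^{m^d\times q}$, \emph{independent of $t$}, with $\mat{u}_t\kpr{d}=\mat{T}\,\mat{v}_t$, where $\mat{v}_t\in\mathbb{R}^{q}$ collects the $q$ distinct monomials, say with the constant monomial $1$ in its first coordinate.

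First I would assemble the $kq\times N$ block Hankel matrix $\mat{V}_{0|k-1}$ built from the vectors $\mat{v}_t$ exactly as $\mat{U}_{0|k-1}$ is built from the $\mat{u}_t\kpr{d}$. Since the duplication acts block-wise and uniformly in time, $\mat{U}_{0|k-1}=(\mat{I}_k\otimes\mat{T})\,\mat{V}_{0|k-1}$, and therefore $\operatorname{rank}(\mat{U}_{0|k-1})\le\operatorname{rank}(\mat{V}_{0|k-1})\le kq$. This already removes the exponential dependence on $d$, but it is still off by $k-1$ from the claimed bound.

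The improvement comes from the row carrying the constant monomial. In each of the $k$ block rows of $\mat{V}_{0|k-1}$ that row equals $(1,1,\ldots,1)\in\mathbb{R}^{N}$, independently of which block row it sits in; hence $\mat{V}_{0|k-1}$ has $k$ identical rows, which together span only a one-dimensional subspace of $\mathbb{R}^N$. Discarding $k-1$ of these duplicate rows does not change the row space, so $\operatorname{rank}(\mat{V}_{0|k-1})\le kq-(k-1)=k\binom{d+m-1}{m-1}-k+1=r$, and the bound on $\mat{U}_{0|k-1}$ follows. Equivalently, the row space of $\mat{U}_{0|k-1}$ is spanned by one all-ones vector (from the constant terms) together with at most $k(q-1)$ vectors coming from the non-constant monomials in the $k$ block rows, giving the same count $1+k(q-1)=r$.

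The only step that needs care is the bookkeeping in the first paragraph: verifying that the entries of $\mat{u}_t\kpr{d}$ are, up to repetition, in bijection with the monomials of degree $\le d$ in $m-1$ variables, and that the induced duplication pattern is the same for every $t$, so that the factor $\mat{I}_k\otimes\mat{T}$ is legitimate. The count $\sum_{e=0}^{d}\binom{e+m-2}{m-2}=\binom{d+m-1}{m-1}$ is the standard stars-and-bars (hockey-stick) identity. Everything else is elementary linear algebra, so I do not expect a genuine obstacle; the one point worth stating explicitly is that the bound is an inequality, so tightness is not at issue.
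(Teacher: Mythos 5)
Your proposal is correct and follows essentially the same route as the paper's proof: you bound each $m^d$-row block by its $\binom{d+m-1}{m-1}$ distinct rows (your duplication-matrix factorization is just a more formal packaging of this counting step) and then observe that the all-ones row, which appears once in each of the $k$ blocks, contributes only a single dimension, yielding $k\binom{d+m-1}{m-1}-k+1$. No gaps; the extra bookkeeping you flag (the bijection with monomials of degree at most $d$ and the time-independence of the duplication pattern) is exactly what the paper's phrase ``repeated Kronecker product structure'' implicitly relies on.
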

\begin{proof}
Consider the first block of $m^d$ rows of $\mat{U}_{0|k-1}$. Due to its repeated Kronecker product structure, this block has ${d+m-1 \choose m-1}$ distinct rows, which serves as an upper bound for its rank. There are $k$ such row blocks in $\mat{U}_{0|k-1}$, implying the upper bound for the rank is $k{d+m-1 \choose m-1}$. Each of the $k$ blocks, however, has one row that consists entirely of 1s. Only one such row contributes to the rank and the upper bound for the rank of $\mat{U}_{0|k-1}$ is therefore $k{d+m-1 \choose m-1}-k+1$.
\end{proof}
\begin{definition}
The input of a polynomial state space system~\eqref{eqn:pss} of total degree $d$ is persistent exciting of order $k$ if and only if $\textrm{rank}(\mat{U}_{0|k})=k{d+m-1 \choose m-1}-k+1$.
\end{definition}

\section{Tensor network subspace identification}
\label{sec:tnmoesp}
As the input-output relationship of our polynomial state space model satisfies~\eqref{eqn:blockhankel}, any subspace method can be applied in principle for the identification of the $\mat{A},\mat{B},\mat{C},\mat{D}$ matrices. Two candidates are the N4SID algorithm by Van Overschee and De Moor~\cite{n4sid,van2012subspace} and the MOESP algorithm by Verhaegen and Dewilde~\cite{moesp1,moesp2}.

Of particular concern are the $\mat{B}$ and $\mat{D}$ matrices, which have an exponential number of coefficients that need to be estimated. For moderate values of $m$ and $d$ one could still use a conventional matrix based implementation. It is possible, however, that it becomes impractical to store $\mat{B},\mat{D}$ in memory for large values of both $m$ and $d$. Our solution to this problem is to compute and store a tensor decomposition of $\ten{B},\ten{D}$ instead. More specifically, all entries of $\ten{B},\ten{D}$ can be computed from a tensor network. This implies that the subspace algorithm needs to be modified such that all computations can be performed on tensor networks. The MOESP algorithm in particular lends itself very well to such a modification. The main problem with N4SID is that it first estimates a state sequence and then recovers the $\mat{A},\mat{B},\mat{C},\mat{D}$ matrices in a single step by solving a linear system. This last step becomes problematic when $\mat{B}$ and $\mat{D}$ are represented by a tensor decomposition. The MOESP method, on the other hand, estimates both $\mat{A},\mat{C}$ and $\mat{B},\mat{D}$ in separate steps. The conventional MOESP algorithm is fully described in Algorithm~\ref{alg:moesp}. Before going into the required tensor network modifications of Algorithm~\ref{alg:moesp} in detail, we first discuss a few assumptions. 
\begin{alg}Conventional MOESP algorithm~\cite[p.~159]{katayama2005subspace}\\
\label{alg:moesp}
\textit{\textbf{Input}}: $L$ samples $(\mat{u}_0,\mat{y}_0),\ldots,(\mat{u}_{L-1},\mat{y}_{L-1})$, $k$\\
\textit{\textbf{Output}}:\makebox[0pt][l]{ Matrices $\mat{A},\mat{B},\mat{C},\mat{D}$}
\begin{algorithmic}[1]
\State $\begin{pmatrix}\mat{U}_{0|k-1} \\ \mat{Y}_{0|k-1}\end{pmatrix} = \begin{pmatrix}L_{11} & \mat{0} \\ L_{21} & L_{22} \end{pmatrix} \, \begin{pmatrix}\mat{Q}_1^T\\ \mat{Q}_2^T \end{pmatrix}$
\State SVD of $L_{22} = \begin{pmatrix}\mat{U}_1 & \mat{U}_2 \end{pmatrix} \begin{pmatrix} \mat{S}_1 & \mat{0}\\\mat{0} & \mat{0} \end{pmatrix} \begin{pmatrix} \mat{V}_1^T \\ \mat{V}_2^T\end{pmatrix}$
\State Define system order as $n:=\textrm{rank}(L_{22})$
\State $\mat{O}_k=\mat{U}_1\,\mat{S}_1^{1/2}$ and $\mat{C} = \mat{O}_k(1:p,:)$
\State Compute $\mat{A}$ from $\mat{O}_k(1:kp-p,:)\mat{A}=\mat{O}(p+1:kp,:)$
\State Partition $\mat{U}_2^T:=\begin{pmatrix}\mat{L}_1 & \cdots & \mat{L}_k \end{pmatrix}$ into $k$ blocks of size \mbox{$(kp-n) \times p$}.
\State Partition $\mat{U}_2^T\mat{L}_{21}\mat{L}_{11}^{-1}:=\begin{pmatrix}\mat{M}_1 & \cdots & \mat{M}_k \end{pmatrix}$ into $k$ blocks of size \mbox{$(kp-n)\times m^d$}
\State Define $\bar{\mat{L}}_i:= \begin{pmatrix}\mat{L}_i & \cdots & \mat{L}_k \end{pmatrix}$, $i=2,\ldots k$
\State Compute $\mat{B},\mat{D}$ from
\begin{align}
\label{eqn:BDsystem}
\begin{pmatrix}
\mat{L}_1& \bar{\mat{L}}_2\mat{O}_{k-1}\\
\mat{L}_2& \bar{\mat{L}}_3\mat{O}_{k-2}\\
\vdots & \vdots \\
\mat{L}_{k-1}& \bar{\mat{L}}_k\mat{O}_{1}\\
\mat{L}_k & \mat{0}
\end{pmatrix} \begin{pmatrix} \mat{D} \\ \mat{B} \end{pmatrix} &= \begin{pmatrix}\mat{M}_1 \\ \mat{M}_2 \\ \vdots \\ \mat{M}_{k-1} \\ \mat{M}_k \end{pmatrix}
\end{align}
\end{algorithmic}
\end{alg}
\subsection{Assumptions}
The classical assumptions for the applicability of subspace methods~\cite[p.~151]{katayama2005subspace} are repeated here with a small modification related to the persistence of excitation of the inputs. We also need two additional assumptions specifically related to tensor networks.
\begin{itemize}
\item Assumption 1: $\textrm{rank}(\mat{X}) = n$.
\item Assumption 2: $\textrm{rank}(\mat{U}_{0|k-1})=r=k{d+m-1 \choose m-1}-k+1$.
\item Assumption 3: $\textrm{row}(\mat{X})\; \cap\; \textrm{row}(\mat{U}_{0|k-1}) = \{0\}$, where $\textrm{row}(\cdot)$ denotes the row space of a matrix.
\item Assumption 4: $N,n,p \ll m^d$.
\item Assumption 5: $N = r + kp,\, L = N + k-1$.
\end{itemize}
Assumption 1 implies that the polynomial state space system is reachable\footnote{The reachability problem of our polynomial state space model can also be solved using methods based on Algorithm~\ref{alg:TNSVD}.}, which means that an initial zero state vector $\mat{x}_0=0$ can be transferred to any state in $\mathbb{R}^n$ by means of a sequence of control vectors $\mat{u}_0\kpr{d},\ldots,\mat{u}_{n-1}\kpr{d}$. Assumption 2 is the modified persistence of excitation condition of the input. Assumption 3 implies that the input-output data are obtained from an open-loop experiment, which implies without having any feedback system. Assumptions 4 and 5 imply that the block Hankel matrix $\mat{Y}_{0|k-1}$ can be explicitly constructed and does not require to be stored as a tensor network. Note that $r$ is the rank of $\mat{U}_{0|k-1}$ as given by Theorem~\ref{theo:rankU}. Assumption 5 also ensures that the $L_{11}$ factor that we will compute is of full rank and also allows us to compute $k$ for a given set of $L$ measurements. Indeed, since $L=N+k-1$, this implies that $k=L/(p+{d+m-1 \choose m-1})$.

\subsection{Construction of $\mat{U}_{0|k-1}$ tensor network}
The matrix $\mat{U}_{0|k-1}$ has dimensions $km^d \times N$ and therefore needs to be stored as a tensor network. Fortunately, its block Hankel structure will result in very small tensor network ranks. The most common methods to construct a tensor network are either the TT-SVD algorithm~\cite[p.~2301]{ivanTT} or TT-cross algorithm~\cite[p.~82]{ttcross}. These algorithms are however computationally too expensive as they neither take the block Hankel nor the repeated Khatri-Rao product structure of $\mat{U}_{0|k-1}$ into account. The main idea to convert $\mat{U}_{0|k-1}$ into its tensor network is to realize that the matrix
\begin{align*}
\mat{U}&:=\textrm{reshape}(\mat{U}_{0|k-1},[m^d,kN]),
\end{align*}
consists of $kN$ columns, each of which is a repeated left Kronecker product. In other words, if we define the matrix
\begin{align*}
\tilde{\mat{U}} &:= \begin{pmatrix}\mat{u}_0 & \mat{u}_1 & \cdots & \mat{u}_{k-1} & \mat{u}_1 & \mat{u}_2 & \cdots \mat{u}_{N+k-2}\end{pmatrix} \in \mathbb{R}^{m \times kN},
\end{align*}
then
\begin{align}
\label{eqn:Uodot}
\mat{U} &= \overbrace{\tilde{\mat{U}} \odot \tilde{\mat{U}} \odot \cdots \odot \tilde{\mat{U}}   }^d.
\end{align}
It is possible to construct the tensor network for $\mat{U}$ quite efficiently using $(d-1)$ SVDs and Khatri-Rao products, described in pseudocode as Algorithm~\ref{alg:matrix2TN}. The desired tensor network is constructed starting with the first tensor $\ten{U}^{(1)}$ and proceeds up to $\ten{U}^{(d)}$. The correctness of Algorithm~\ref{alg:matrix2TN} is confirmed as the algorithm consists of computing a Khatri-Rao product followed by an SVD to determine the tensor network rank. The most expensive computational step in Algorithm~\ref{alg:matrix2TN} is the SVD with a computational complexity of approximately $O(r_jm^3k^2N^2)$ flops~\cite[p.~254]{matrixcomputations}. A graphical representation of the obtained tensor network for $\mat{U}$ with all dimensions labeled is shown in Figure~\ref{fig:UiTN}. Algorithm~\ref{alg:matrix2TN} is implemented in the TNMOESP MATLAB package as rkh2tn.m.
\begin{alg}Convert repeated Khatri-Rao product matrix into tensor network.\\
\label{alg:matrix2TN}
\textit{\textbf{Input}}: $m \times kN$ matrix $\tilde{\mat{U}}$, factor $d$\\
\textit{\textbf{Output}}:\makebox[0pt][l]{ tensor network $\ten{U}^{(1)},\ldots,\ten{U}^{(d)}$ of \eqref{eqn:Uodot}}
\begin{algorithmic}[1]
\State $\ten{U}^{(1)} \gets \textrm{reshape}(\tilde{\mat{U}},[1,m,kN,1])$
\For{$j=1,\ldots,d-1$}
\State $\mat{T} \gets \textrm{reshape}(\ten{U}^{(j)},[r_jm,kN])$ \hfill \%\,$r_1=1$
\State $\mat{T} \gets \mat{T} \odot \tilde{\mat{U}} $
\State $\mat{T} \gets \textrm{reshape}(\mat{T},[r_jm,mkN])$
\State $[\mat{U},\mat{S},\mat{V}] \gets \textrm{SVD}(\mat{T})$
\State $r_{j+1} \gets $ numerical rank of $\mat{T}$ determined from SVD
\State $\ten{U}^{(j)} \gets \textrm{reshape}(\mat{U},[r_j,m,1,r_{j+1}])$
\State $\ten{U}^{(j+1)} \gets \textrm{reshape}(\mat{S}\mat{V}^T,[r_{j+1},m,kN,1])$
\EndFor
\end{algorithmic}
\end{alg}
\begin{figure}[tb]
\begin{center}
\input{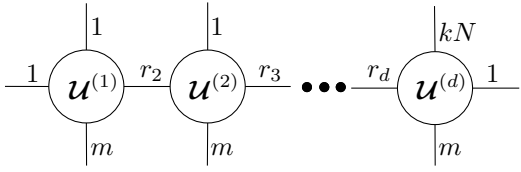}
\caption{The tensor network of $\mat{U}$ as obtained from Algorithm~\ref{alg:matrix2TN}. Note that $r_1=r_{d+1}=1$.}
\label{fig:UiTN}
\end{center}
\end{figure}
Converting the result of Algorithm~\ref{alg:matrix2TN} into the tensor network of the $km^d \times N$ matrix $\mat{U}_{0|k-1}$ is very straightforward. This is achieved through the following reshaping
\begin{align}
\label{eqn:reshapeUd}
\textrm{reshape}(\ten{U}^{(d)},[r_d,mk,N,1]).
\end{align}
Another interesting feature is that it is possible to derive explicit upper bounds for the tensor network ranks of the $\mat{U}_{0|k-1}$ matrix.
\begin{theorem}
\label{theo:rankUTN}
The tensor network ranks of the $\mat{U}_{0|k-1}$ matrix have upper bounds
\begin{align}
\label{eqn:rankU}
r_i &\leq  {i-1+m-1 \choose m-1} \; \textrm{ for } i=1,\ldots,d.
\end{align}
\end{theorem}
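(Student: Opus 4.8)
\emph{Proof idea.} The plan is to identify each rank $r_i$ returned by Algorithm~\ref{alg:matrix2TN} with the rank of a particular unfolding of the matrix $\mat{U}$ of \eqref{eqn:Uodot}, and then to bound that rank by a dimension count in the space of symmetric tensors. Throughout, let $\tilde{\mat{u}}_1,\ldots,\tilde{\mat{u}}_{kN}$ denote the columns of $\tilde{\mat{U}}\in\mathbb{R}^{m\times kN}$, so that by \eqref{eqn:Uodot} the $c$-th column of the $m^d\times kN$ matrix $\mat{U}$ is the vectorization of the $d$-fold tensor power $\tilde{\mat{u}}_c^{\otimes d}$; concretely, its entry at the grouped row index $[i_1\cdots i_d]$ ($i_l\in\{1,\ldots,m\}$) is the product $(\tilde{\mat{u}}_c)_{i_1}\cdots(\tilde{\mat{u}}_c)_{i_d}$. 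Note that the reshape \eqref{eqn:reshapeUd} only re-splits the last core and leaves $r_1,\ldots,r_d$ unchanged, so it suffices to work with $\mat{U}$.

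First I would observe that Algorithm~\ref{alg:matrix2TN} is a left-to-right SVD sweep: at step $j$ the core $\ten{U}^{(j)}$ ($j<d$) is assembled from the left singular vectors of the processed matrix and hence has orthonormal columns when unfolded as an $r_jm\times r_{j+1}$ matrix, while the residual $\mat{S}\mat{V}^T$ has full row rank $r_{j+1}$. Therefore the algorithm returns a left-orthogonal matrix product operator with minimal (canonical) bond dimensions, and by the sharper form of Theorem~\ref{theo:MPTranks} (the characterization of canonical MPO ranks in~\cite{ttcross,ivanTT}), $r_i$ equals the rank of the unfolding $\mat{M}_{i-1}$ of $\mat{U}$ whose rows are indexed by $(i_1,\ldots,i_{i-1})$ and whose columns are indexed by $(i_i,\ldots,i_d)$ together with the column index $c\in\{1,\ldots,kN\}$. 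Observe that Theorem~\ref{theo:MPTranks} by itself yields only the far weaker bound $r_i\le m^{i-1}$; the content of the present theorem is that this particular unfolding has much smaller rank. As an alternative to invoking MPO theory, one may prove by induction on $j$, directly tracking the Khatri--Rao products and reshapes in Algorithm~\ref{alg:matrix2TN}, that the matrix SVD'd at step $i-1$ has the same column space as $\mat{M}_{i-1}$; this is elementary but somewhat fiddly because the mode of size $m$ is interleaved with the one of size $kN$.

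Next I would describe the column space of $\mat{M}_{i-1}$ explicitly. A column of $\mat{M}_{i-1}$ is fixed by a choice of $i_i,\ldots,i_d$ and $c$, and its entry in row $(i_1,\ldots,i_{i-1})$ is $\bigl(v_{i_i}\cdots v_{i_d}\bigr)\,v_{i_1}\cdots v_{i_{i-1}}$, where $v_1,\ldots,v_m$ are the entries of $\tilde{\mat{u}}_c$. Since the parenthesized factor does not depend on $(i_1,\ldots,i_{i-1})$, every column of $\mat{M}_{i-1}$ is a scalar multiple of the vectorization of $\tilde{\mat{u}}_c^{\otimes(i-1)}$, so the column space of $\mat{M}_{i-1}$ is contained in $\textrm{span}\{\tilde{\mat{u}}_c^{\otimes(i-1)}:c=1,\ldots,kN\}$. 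This span lies inside the subspace of $(\mathbb{R}^m)^{\otimes(i-1)}$ consisting of tensors invariant under permutation of their $i-1$ factors, because every pure power $\mat{u}^{\otimes(i-1)}$ is such a tensor. That symmetric subspace has dimension equal to the number of monomials of degree $i-1$ in $m$ variables, namely $\binom{(i-1)+(m-1)}{m-1}=\binom{i-1+m-1}{m-1}$. Hence $r_i=\textrm{rank}(\mat{M}_{i-1})\le\binom{i-1+m-1}{m-1}$; for $i=1$ this reduces to the trivial statement $r_1\le\binom{m-1}{m-1}=1$, in agreement with the convention $r_1=1$.

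The only genuinely delicate step is the identification, in the second paragraph, of the bond dimensions \emph{actually produced by Algorithm~\ref{alg:matrix2TN}} with the unfolding ranks of $\mat{U}$ (equivalently, tracking the Khatri--Rao products and reshapes carefully); once that is in place, the dimension count in the symmetric subspace, which is exactly where the improvement over the naive bound $m^{i-1}$ comes from, is classical and short.
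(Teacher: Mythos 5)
Your proof is correct, and the combinatorial heart of it is the same as the paper's: the symmetry of the repeated Kronecker/Khatri--Rao power of a single vector $\tilde{\mat{u}}_c$, which caps the relevant dimension at the number of degree-$(i-1)$ monomials in $m$ variables, $\binom{i-1+m-1}{m-1}$. The packaging differs, though. The paper argues entirely inside Algorithm~\ref{alg:matrix2TN}: at sweep step $j$ it bounds the rank delivered by the SVD by the number of \emph{distinct rows} of $\tilde{\mat{U}}^{\odot j}$ (rows indexed by the same multiset of physical indices coincide), so it never needs to speak of unfoldings or of canonical ranks, and it bounds exactly the quantity the algorithm computes. You instead bound the \emph{column space} of the global unfolding $\mat{M}_{i-1}$ of $\mat{U}$ by the symmetric subspace of $(\mathbb{R}^m)^{\otimes(i-1)}$, and then identify the algorithm's bond dimensions with these unfolding ranks via TT-SVD/canonical-rank theory; that identification is, as you note, the delicate step, and it is stronger than what the theorem needs. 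It can be bypassed in your framework by observing that the matrix SVD'd at step $i-1$ is a reshaping of $\tilde{\mat{U}}^{\odot i}$ left-multiplied by the (orthonormal) compressions retained from earlier steps, so $r_i\le\mathrm{rank}(\mat{M}_{i-1})$ follows in the easy direction (compression never increases rank), which is essentially the paper's route. What your version buys is a representation-independent statement — the canonical tensor network ranks of $\mat{U}_{0|k-1}$ are bounded (indeed characterized, when the SVDs are exact) by dimensions of spans of symmetric powers $\tilde{\mat{u}}_c^{\otimes(i-1)}$ — which also makes transparent why the bounds are attained for persistently exciting inputs; what the paper's version buys is brevity and the fact that no appeal to minimality of the computed ranks is required.
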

\begin{proof}
For $i=1$ we have that $r_1=1$, which is trivially true. Consider $j=1$ and line 3 in Algorithm~\ref{alg:matrix2TN}. For this case $\tilde{\mat{U}}$ has ${1+m-1 \choose m-1}=m$ distinct columns, which sets the upper bound for $r_2$ as derived by the SVD in line 7 to $m$. For $j=2$,  $\tilde{\mat{U}} \odot \tilde{\mat{U}}$ has ${2+m-1 \choose m-1}$ distinct columns, which similarly acts as an upper bound for $r_3$. Note that the previous ranks $r_1,r_2$ remain unchanged in any further iterations. Using this argument up to $j=d-1$ then results in the rank upper bounds~\eqref{eqn:rankU}.
\end{proof}
In practice, when the inputs are persistent exciting, these upper bounds are always attained. Observe that the tensor network ranks only depend on the number of inputs $m-1$ and the total degree $d$. Neither the number of outputs $p$, nor the number of columns $N$ of $\mat{U}_{0|k-1}$ affect the ranks. This is completely due to the block Hankel and repeated Khatri-Rao product structures. The following example compares the tensor network ranks of $\mat{U}_{0|k-1}$ with the conventional upper bounds of Theorem~\ref{theo:MPTranks}.
\begin{example}
Consider a single-input system $(m=2)$ with $d=10$, the tensor network ranks of $\mat{U}_{0|k-1}$ are then simply $r_2=2, r_3=3, \ldots, r_{10}=10$. The conventional upper bounds are $r_2=2,r_3=4,\ldots,r_{10}=512$. Note the difference of one order of magnitude for $r_{10}$. For a system with four inputs $(m=5)$ and $d=10$, this difference becomes even larger as we have that $r_{10}=715$, compared to the conventional upper bound of 1953125.
\end{example}

\subsection{Computation of $\mat{L}_{11},\mat{L}_{21},\mat{L}_{22}$}
A major advantage of both the conventional N4SID and MOESP methods is that the orthogonal factors in the LQ decomposition never need to be computed. The tensor network modification of Algorithm~\ref{alg:moesp}, however, requires the explicit computation of the orthogonal factors $\mat{Q}_1,\mat{Q}_2$. In fact, the LQ decomposition in line 1 of Algorithm~\ref{alg:moesp} cannot be computed in tensor network form. Instead, an economical SVD of $\mat{U}_{0|k-1}$ 
\begin{align}
\label{eqn:Usvd}
\mat{U}_{0|k-1} &= \mat{W}  \mat{T} \mat{Q}^T = \begin{pmatrix}\mat{W}_1 & \mat{W}_2 \end{pmatrix}  \begin{pmatrix} \mat{T}_1 & \mat{0} \\ \mat{0} & \mat{0} \end{pmatrix}  \begin{pmatrix} \mat{Q}_1^T \\ \mat{Q}_2^T \end{pmatrix}
\end{align}
is computed with $\mat{W} \in \mathbb{R}^{km^d \times N}$ an orthogonal matrix, $\mat{T} \in \mathbb{R}^{N \times N}$ a diagonal matrix and $\mat{Q}\in \mathbb{R}^{N \times N}$ an orthogonal matrix. From assumption 2 it follows that $\mat{T}_1 \in \mathbb{R}^{r \times r}$. Let $\ten{U}^{(1)},\ldots,\ten{U}^{(d)}$ denote the tensor network of $\mat{U}_{0|k-1}$ and $\ten{W}^{(1)},\ldots,\ten{W}^{(d)}$ denote the tensor network of the orthogonal $\mat{W}$ factor in~\eqref{eqn:Usvd}. The orthogonal $\mat{W}$ factor is then computed in tensor network form using Algorithm~\ref{alg:TNSVD}.

The main idea of Algorithm~\ref{alg:TNSVD} is the orthogonalization of each of the tensors $\ten{U}^{(1)},\ldots,\ten{U}^{(d-1)}$ through a thin QR decomposition at line 3. This orthogonalization ensures that the obtained tensor network for $\mat{W}$ has the property $\mat{W}^T\mat{W}=\mat{I}_N$. The thin QR decomposition implies that the orthogonal matrix $\mat{W}$ has size $r_im \times r_{i+1}$ and $\mat{R}_i \in \mathbb{R}^{r_{i+1} \times r_{i+1}}$. The $\mat{R}_i$ factor is always absorbed by the next tensor in the network at line 5. Finally, the last tensor $\ten{U}^{(d)}$ is reshaped into an $r_dkm \times N$ matrix $\mat{U}_d$ and an economical SVD is computed at lines 7 and 8, respectively. The computed $\mat{T}$ and $\mat{Q}$ matrices are the desired factors. The computationally dominating step is the SVD of $\mat{U}_d$, which needs approximately $O(r_dmkN^2)$ flops. In order for the matrices $\mat{T},\mat{Q}$ to have the correct dimensions, it is required that $r_dkm \geq N$. Using Theorem~\ref{theo:rankUTN} it can be shown that when $L \approx L+1$ and $p-1 \leq (d-1) {d+m-2 \choose m-2}$, this condition is always satisfied. The implementation of Algorithm~\ref{alg:TNSVD} is quite straightforward. Note that due to line 8 of Algorithm~\ref{alg:matrix2TN} the first $d-1$ tensors of $\mat{U}_{0|k-1}$ are already orthogonal. This implies that the orthogonalization through the QR decompositions can be skipped and only lines 7 to 9 of Algorithm~\ref{alg:TNSVD} need to be executed on the result of Algorithm~\ref{alg:matrix2TN}.
\begin{alg}Economical SVD of $\mat{U}_{0|k-1}$ in tensor network form.\\
\textit{\textbf{Input}}: tensor network $\ten{U}^{(1)},\ldots,\ten{U}^{(d)}$ of $\mat{U}_{0|k-1}$.\\
\textit{\textbf{Output}}:\makebox[0pt][l]{ tensor network $\ten{W}^{(1)},\ldots,\ten{W}^{(d)}$ of orthogonal $\mat{W}$, }\\
\makebox[0pt][l]{\quad \quad \quad \, diagonal matrix $\mat{T}$ and orthogonal matrix $\mat{Q}$.}  
\begin{algorithmic}[1]
\For {i=1,\ldots,d-1}
\State $\mat{U}_i \gets \textrm{reshape}(\ten{U}^{(i)},[r_im,r_{i+1}])$.
\State $[\mat{W}_i,\, \mat{R}_i] \gets \textrm{QR}(\mat{U}_i)$.
\State $\ten{W}^{(i)} \gets \textrm{reshape}(\mat{W}_i,[r_i,m,1,r_{i+1}])$.
\State $\ten{U}^{(i+1)} \gets \ten{U}^{(i+1)} \times_1 \mat{R}_i$.
\EndFor
\State $\mat{U}_d \gets \textrm{reshape}(\ten{U}^{(d)},[r_dmk, N])$.
\State $[\mat{W},\,\mat{T},\,\mat{Q}] \gets \textrm{SVD}(\mat{U}_d)$.
\State $\ten{W}^{(d)}\gets \textrm{reshape}(\mat{W},[r_d,mk,N,1])$. 
\end{algorithmic}
\label{alg:TNSVD}
\end{alg}
\begin{figure}[tb]
\begin{center}
\input{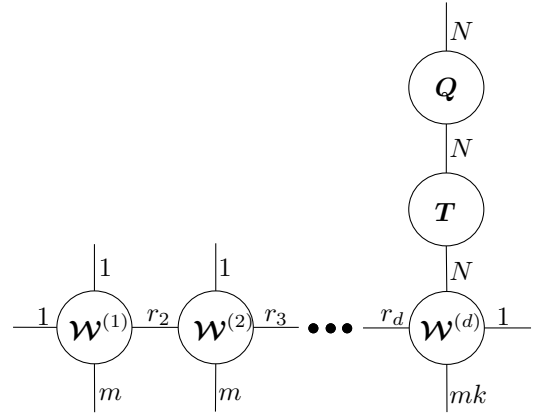}
\caption{The tensor network of $\mat{W}$ and matrices $\mat{T},\mat{Q}$ as obtained from Algorithm~\ref{alg:TNSVD} .}
\label{fig:TNSVD}
\end{center}
\end{figure}
A graphical representation of the tensor network for $\mat{W}$ and matrices $\mat{T},\mat{Q}$ with all dimensions labeled as obtained through Algorithm~\ref{alg:TNSVD} is shown in Figure~\ref{fig:TNSVD}. Also note that the persistence of excitation of the input can be numerically verified by inspecting the singular values on the diagonal of $\mat{T}$. Indeed, if Assumption 2 is valid, then the numerical rank of $\mat{U}_{0|k-1}$ is well-defined and the rank-gap \mbox{$\mat{T}(r,r)/\mat{T}(r+1,r+1)$} should be several orders of magnitude large.

The required matrix factors $\mat{L}_{11},\mat{L}_{21},\mat{L}_{22}$ can now be computed as
\begin{align}
\label{eqn:lfactors}
\nonumber \mat{L}_{11} &= \mat{W}_1 \mat{T}_1 \in \mathbb{R}^{km^d \times r},\\
\nonumber \mat{L}_{21} &= \mat{Y}_{0|k-1}\mat{Q}_1 \in \mathbb{R}^{kp \times r},\\
\mat{L}_{22} &= \mat{Y}_{0|k-1}\mat{Q}_2 \in \mathbb{R}^{kp \times kp}.
\end{align}
The tensor network of $\mat{L}_{11}$ is easily found as the first $d-1$ tensors are identical to the tensors of $\mat{W}$, while the $d$th tensor of $\mat{L}_{11}$ is $\ten{W}^{(d)} \times_3 \begin{pmatrix} \mat{T}_1 & \mat{O} \end{pmatrix}^T$.

\subsection{Computation of the matrices $\mat{A}$ and $\mat{C}$}
Once the matrix factors $\mat{L}_{11},\mat{L}_{21},\mat{L}_{22}$ are computed, then the conventional MOESP algorithm can be used to find $\mat{A}$ and $\mat{C}$. The SVD of the $kp \times kp$ matrix $\mat{L}_{22}$ requires approximately $O(k^3p^3)$ flops and reveals the system order $n$. The extended observability matrix $\mat{O}_k$ is then computed as $\mat{U}_1\mat{S}_1^{1/2}$, from which the first $p$ rows are taken to be the $\mat{C}$ matrix. The $\mat{A}$ matrix is found from exploiting the shift property of the extended observability matrix. In order for the pseudoinverse of the matrix $\mat{O}(1:kp-p,:)$ in line 6 of Algorithm~\ref{alg:moesp} to be unique it is required that $kp-p\geq n$, which implies that $k\geq n+1$ for the minimal case $p=1$. This means that $k$ determines the maximal system order $n$ that can be found. Computing the psuedoinverse of the $(kp-p) \times n$ matrix $\mat{O}(1:kp-p,:)$ requires approximately $O((kp-p)n^2)$ flops.

\subsection{Computation of the $\mat{U}_2^T\mat{L}_{21}\mat{L}_{11}^{-1}$ tensor network}
Line 8 of Algorithm~\ref{alg:moesp} requires the computation and partitioning of the $(kp-n) \times km^d$ matrix $\mat{U}_2^T\mat{L}_{21}\mat{L}_{11}^{-1}$. This requires the computation of the left inverse of $\mat{L}_{11}$ in tensor network form. Fortunately, from~\eqref{eqn:lfactors} it follows that $\mat{L}_{11}^{-1} = \mat{T}_1^{-1}\, \mat{W}_1^T$, as $\mat{W}_1^T\mat{W}_1=\mat{I}_r$. The transpose of $\mat{W}_1$ as a tensor network is done by permuting the second with the third dimension of each tensor in the network. The tensor network of $\mat{L}_{11}^{-1}$ is therefore obtained by permuting each of the tensors $\ten{W}^{(i)}$ into $\tilde{\ten{W}}^{(i)}$ and computing $\tilde{\ten{W}}^{(d)} \times_2 \begin{pmatrix}\mat{T}_1^{-1} & \mat{0} \end{pmatrix}$, where the inverse of $\mat{T}_1$ is obtained by inverting its diagonal. Once the tensor network of $\mat{L}_{11}^{-1}$ is obtained, multiplication with $\mat{U}_2^T\mat{L}_{21}$ is also performed on the $d$th tensor. In fact, the previous multiplication with $\mat{T}_1^{-1}$ can be combined with $\mat{U}_2^T\mat{L}_{21}$. This leads to the following theorem.
\begin{theorem}
\label{theo:MTN}
Let $\tilde{\ten{W}}^{(i)}$  $(i=1,\ldots,d)$ be the tensors $\ten{W}^{(i)}$ $(i=1,\ldots,d)$ obtained from Algorithm~\ref{alg:TNSVD} with their second and third dimensions permuted. Then the tensor network $\ten{M}^{(1)},\ldots,\ten{M}^{(d)}$ corresponding with the matrix $\mat{U}_2^T\mat{L}_{21}\mat{L}_{11}^{-1}$ is
\begin{align*}
\ten{M}^{(i)} &:= \tilde{\ten{W}}^{(i)} \in \mathbb{R}^{r_i \times 1 \times m  \times r_{i+1}} \, (i=1,\ldots,d-1),\\
\ten{M}^{(d)} &:= \tilde{\ten{W}}^{(d)} \times_2 \mat{U}_2^T\mat{L}_{21}\begin{pmatrix}\mat{T}_1^{-1} & \mat{0} \end{pmatrix} \in \mathbb{R}^{r_d \times (kp-n) \times km \times 1}.
\end{align*}
The final partitioning of $\mat{U}_2^T\mat{L}_{21}\mat{L}_{11}^{-1}$ into $k$ blocks of size $(kp-n) \times m^d$ is then obtained from
\begin{align*}
\textrm{reshape}&(\ten{M}^{(d)},[r_d,(kp-n),m,k,1]),\\
\textrm{permute}&(\ten{M}^{(d)},[1,2,4,3,5]),\\
\textrm{reshape}&(\ten{M}^{(d)},[r_d,(kp-n)k,m,1]).
\end{align*}
\end{theorem}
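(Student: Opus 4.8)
The plan is to follow the matrix identities that define $\mat{U}_2^T\mat{L}_{21}\mat{L}_{11}^{-1}$ and translate each of them into an operation on the tensor network of $\mat{W}$ produced by Algorithm~\ref{alg:TNSVD}. First I would establish the explicit left inverse $\mat{L}_{11}^{-1}=\mat{T}_1^{-1}\mat{W}_1^T$: this is immediate from~\eqref{eqn:lfactors}, since the columns of $\mat{W}_1$ form a subset of the columns of the orthogonal matrix $\mat{W}$ (so $\mat{W}_1^T\mat{W}_1=\mat{I}_r$) and $\mat{T}_1$ is diagonal with the $r$ nonzero singular values of $\mat{U}_{0|k-1}$ on its diagonal, hence invertible. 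Writing $\mat{W}=\begin{pmatrix}\mat{W}_1 & \mat{W}_2\end{pmatrix}$ gives $\mat{W}_1^T=\begin{pmatrix}\mat{I}_r & \mat{O}\end{pmatrix}\mat{W}^T$, so that $\mat{U}_2^T\mat{L}_{21}\mat{L}_{11}^{-1}=\big(\mat{U}_2^T\mat{L}_{21}\begin{pmatrix}\mat{T}_1^{-1} & \mat{O}\end{pmatrix}\big)\mat{W}^T$, and the whole task reduces to left-multiplying the network of $\mat{W}^T$ by the single $(kp-n)\times N$ matrix $\mat{U}_2^T\mat{L}_{21}\begin{pmatrix}\mat{T}_1^{-1} & \mat{O}\end{pmatrix}$.

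Next I would observe that permuting the second (row) and third (column) index of every core turns a matrix product operator of $\mat{W}$ into one of $\mat{W}^T$, because the contraction~\eqref{eqn:mpodef} is symmetric under that exchange; these permuted cores are exactly the $\tilde{\ten{W}}^{(i)}$. Since the first $d-1$ cores of $\mat{W}$ have column dimension $1$, the first $d-1$ row indices of $\mat{W}^T$ are trivial, so the combined row index of $\mat{W}^T$ coincides with the size-$N$ dimension-2 index of $\tilde{\ten{W}}^{(d)}$; therefore left-multiplication by the matrix from the first step is realised by the single $k$-mode product $\tilde{\ten{W}}^{(d)}\times_2\mat{U}_2^T\mat{L}_{21}\begin{pmatrix}\mat{T}_1^{-1} & \mat{O}\end{pmatrix}$, leaving $\tilde{\ten{W}}^{(1)},\ldots,\tilde{\ten{W}}^{(d-1)}$ untouched. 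This yields the claimed cores $\ten{M}^{(i)}$, and the stated dimensions follow by chasing the reshapes in Algorithm~\ref{alg:TNSVD} (namely $\ten{W}^{(i)}\in\mathbb{R}^{r_i\times m\times 1\times r_{i+1}}$ for $i\le d-1$ and $\ten{W}^{(d)}\in\mathbb{R}^{r_d\times mk\times N\times 1}$).

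For the partitioning claim I would track how the size-$km^d$ column index of $\mat{U}_2^T\mat{L}_{21}\mat{L}_{11}^{-1}$ (equivalently, the row index of $\mat{W}$) is distributed over the network. Composing the reshapes in Algorithm~\ref{alg:matrix2TN}, the reshape~\eqref{eqn:reshapeUd} and line~7 of Algorithm~\ref{alg:TNSVD} shows that the first $d-1$ cores carry the Kronecker indices $i_1,\ldots,i_{d-1}$ of size $m$, while the dimension-3 slot of the last core carries the combined index $i_d+(a-1)m$, where $i_d$ is the last Kronecker factor and $a\in\{1,\ldots,k\}$ is the block-row index of $\mat{U}_{0|k-1}$ --- the latter appearing as the slow sub-index precisely because block row $a$ occupies rows $(a-1)m^d+1,\ldots,am^d$ of the block Hankel matrix. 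Since $\mat{M}_a$ in line~7 of Algorithm~\ref{alg:moesp} is by definition the set of columns of $\mat{U}_2^T\mat{L}_{21}\mat{L}_{11}^{-1}$ with block-row index $a$, splitting the size-$km$ leg into $(i_d,a)$, permuting $a$ next to the $(kp-n)$ output index, and merging the two produces a tensor in which fixing $a$ is exactly the last core of the matrix product operator of $\mat{M}_a$, which together with $\ten{M}^{(1)},\ldots,\ten{M}^{(d-1)}$ reconstructs each $\mat{M}_a$. The main obstacle is precisely this last step of index bookkeeping: keeping straight which of $i_d$ and $a$ is the fast sub-index inside the size-$km$ leg and verifying that the transpose together with the two left-multiplications act on dimension $2$ of the last core and nowhere else; the remaining parts are a routine chain of reshapes and the elementary left-inverse identity.
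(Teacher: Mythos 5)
Your proposal is correct and follows essentially the same route as the paper's own (informal) derivation: the left inverse $\mat{L}_{11}^{-1}=\mat{T}_1^{-1}\mat{W}_1^T$ from~\eqref{eqn:lfactors}, transposition of the network by swapping the second and third dimension of each core, absorption of $\mat{U}_2^T\mat{L}_{21}\begin{pmatrix}\mat{T}_1^{-1} & \mat{0}\end{pmatrix}$ into the $d$th core via a mode-2 product, and the reshape--permute--reshape bookkeeping on the grouped $[i_d,a]$ index (with $i_d$ fast, $a$ slow) to recover the blocks $\mat{M}_1,\ldots,\mat{M}_k$. Your index tracking is, if anything, more explicit than the paper's, which simply asserts the grouped-index structure of the third dimension of $\ten{M}^{(d)}$.
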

The final three steps to obtain the desired partitioning is due to the index of the third dimension of $\ten{M}^{(d)}$ being a grouped index $[ij]$ with $1\leq i \leq m$ and $1\leq j \leq k$. The first reshape operation separates this grouped index into its two components $i$ and $j$, after which they are permuted and the $j$ index is finally ``absorbed'' into the index of the second dimension.

\subsection{Computation of the $\mat{B},\mat{D}$ tensor network}
In order to estimate the matrices $\mat{B},\mat{D}$ the linear system~\eqref{eqn:BDsystem} needs to be solved. Computing the pseudoinverse of the $k(kp-n) \times (p+n)$ matrix on the left hand side of~\eqref{eqn:BDsystem} requires approximately $O(k(kp-n)(p+n)^2)$ flops. If we denote this pseudoinverse by $\mat{L}^{-1}$, then the concatenation of $\mat{D}$ with $\mat{B}$ is found as
\begin{align}
\label{eqn:BDsol}
\begin{pmatrix} \mat{D} \\ \mat{B} \end{pmatrix} &= \mat{L}^{-1}\,\begin{pmatrix}\mat{M}_1 \\ \mat{M}_2 \\ \vdots \\ \mat{M}_{k-1} \\ \mat{M}_k \end{pmatrix}.
\end{align}
The partitioned $\mat{U}_2^T\mat{L}_{21}\mat{L}_{11}^{-1}$ matrix is already available to us as a tensor network from the previous section. Therefore, the contraction $\ten{M}^{(d)} \times_2 \mat{L}^{-1}$ results in the tensor network that represents the concatenation of $\mat{D}$ with $\mat{B}$. As mentioned in Section~\ref{sec:pss}, the affine terms of both $\mat{B}$ and $\mat{D}$ are defined to be zero and hence need to be set explicitly to zero in the estimation. This can be achieved by multiplying \eqref{eqn:BDsol} to the right with the $m^d \times m^d$ matrix
\begin{align*}
\mat{P} &=
\begin{pmatrix} 0 & \mat{0}\\
\mat{0} & \mat{I}
\end{pmatrix},
\end{align*}
which is essentially the unit matrix with entry $(1,1)$ set to zero. The matrix $\mat{P}$ has the following exact uniform rank-2 tensor network representation
\begin{align*}
\mat{P} &= \mat{I}_m \otimes \mat{I}_m \otimes \cdots \otimes \mat{I}_m \otimes \mat{I}_m +\\
& \begin{pmatrix}\mat{e}_1 & \mat{0} \end{pmatrix} \otimes \begin{pmatrix}\mat{e}_1 & \mat{0} \end{pmatrix} \otimes \cdots \otimes  \begin{pmatrix}\mat{e}_1 & \mat{0} \end{pmatrix} \otimes  \begin{pmatrix}-\mat{e}_1 & \mat{0} \end{pmatrix},
\end{align*}
where all Kronecker factors are $m \times m$ matrices and $\mat{e}_1$ is the first canonical basis vector in $\mathbb{R}^m$. The multiplication of $\begin{pmatrix} \mat{D}^T & \mat{B}^T \end{pmatrix}^T\mat{P}$ in tensor network form is then achieved by contracting all $d$ corresponding tensors with a total computational complexity of approximately $O(dr^2m^2)$ flops, where $r$ denotes the maximal tensor network rank.

\subsection{Tensor network MOESP}
Algorithm~\ref{alg:TNmoesp} describes our tensor network MOESP (TNMOESP) method for the identification of polynomial state space systems~\eqref{eqn:pss}. Again, the fact that both $\mat{A},\mat{C}$ and $\mat{B},\mat{D}$ are estimated separately in MOESP is an advantage for the required modifications to construct a tensor network version of this algorithm. Algorithm~\ref{alg:TNmoesp} is implemented in the TNMOESP MATLAB package as TNmoesp.m.
\begin{alg}Tensor network MOESP algorithm\\
\label{alg:TNmoesp}
\textit{\textbf{Input}}: $L$ samples $(\mat{u}_0,\mat{y}_0),\ldots,(\mat{u}_{L-1},\mat{y}_{L-1})$, $k$\\
\textit{\textbf{Output}}:\makebox[0pt][l]{ Matrices $\mat{A},\mat{C}$, tensor network $\ten{T}^{(1)},\ldots,\ten{T}^{(d)}$.}
\begin{algorithmic}[1]
\State $\ten{U}^{(1)},\ldots,\ten{U}^{(d)} \gets$ Algorithm~\ref{alg:matrix2TN}
\State $\textrm{reshape}(\ten{U}^{(d)},[r_d,mk,N,1])$
\State $\ten{W}^{(1)},\ldots,\ten{W}^{(d)}, \mat{T}, \mat{Q} \gets$ Algorithm~\ref{alg:TNSVD}
\State $\begin{pmatrix}\mat{L}_{21} & \mat{L}_{22} \end{pmatrix} = \mat{Y}_{0|k-1}\,\mat{Q}$
\State SVD of $L_{22} = \begin{pmatrix}\mat{U}_1 & \mat{U}_2 \end{pmatrix} \begin{pmatrix} \mat{S}_1 & \mat{0}\\\mat{0} & \mat{0} \end{pmatrix} \begin{pmatrix} \mat{V}_1^T \\ \mat{V}_2^T\end{pmatrix}$
\State Define system order as $n:=\textrm{rank}(L_{22})$
\State $\mat{O}_k=\mat{U}_1\,\mat{S}_1^{1/2}$ and  $\mat{C} = \mat{O}_k(1:p,:)$
\State Compute $\mat{A}$ from $\mat{O}_k(1:kp-p,:)\mat{A}=\mat{O}(p+1:kp,:)$
\State Partition $\mat{U}_2^T:=\begin{pmatrix}\mat{L}_1 & \cdots & \mat{L}_k \end{pmatrix}$ into $k$ blocks of size \mbox{$(kp-n) \times p$}.
\State $\tilde{\ten{W}}^{(i)} :=$ permute($\ten{W}^{(i)}$,[1,3,2,4]), $i=1,\ldots,d$
\State $\ten{M}^{(1)},\ldots,\ten{M}^{(d)} \gets$ Theorem~\ref{theo:MTN}
\State $\ten{T}^{(i)} := \ten{M}^{(i)},\;i=1,\ldots,d-1$
\State $\ten{T}^{(d)} := \ten{M}^{(d)} \times_2 \mat{L}^{-1}$ 
\State Contract $\ten{T}^{(1)},\ldots,\ten{T}^{(d)}$ with tensor network of matrix $\mat{P}$
\end{algorithmic}
\end{alg}

\section{Simulation of the polynomial state space model}
\label{sec:sim}
Algorithm~\ref{alg:TNmoesp} does not return separate tensor networks for $\mat{B}$ and $\mat{D}$. This is also not strictly required when simulating the model. In fact, having the concatenation of $\mat{B}$ and $\mat{D}$ available in one tensor network simplifies the simulation. Instead of forming the repeated Kronecker product $\mat{u}_t\kpr{d}$, one simply needs to contract $\mat{u}_t$ with each tensor in the network as indicated in Figure~\ref{fig:sim}. These contractions are lines 4 and 7 in Algorithm~\ref{alg:simTN}. The computational complexity of Algorithm~\ref{alg:simTN} is approximately $O(dmr^2+n^2)$ flops, where $r$ is the maximal tensor network rank. After the final contraction with $\mat{u}_t$ on line 7, we obtain the $(p+n) \times 1$ vector that is the concatenation of $\mat{D}\mat{u}_t\kpr{d}$ with $\mat{B}\mat{u}_t\kpr{d}$.
 Algorithm~\ref{alg:simTN} is implemented in the TNMOESP MATLAB package as simTNss.m.
\begin{alg}Simulation of~\eqref{eqn:pss} in tensor network form\\
\label{alg:simTN}
\textit{\textbf{Input}}: Initial state vector $\mat{x}_0$, inputs $\mat{u}_0,\ldots,\mat{u}_{L-1}$, matrices $\mat{A},\mat{C}$ and tensor network $\ten{T}^{(1)},\ldots,\ten{T}^{(d)}$ of $\begin{pmatrix} \mat{D}^T & \mat{B}^T \end{pmatrix}^T$\\
\textit{\textbf{Output}}:\makebox[0pt][l]{ outputs $\mat{y}_0,\ldots,\mat{y}_{L-1}$.}
\begin{algorithmic}[1]
\For{i=0,\ldots,L-1}
\State $\mat{z} = \ten{T}^{(1)}$
\For{j=1,\ldots,d-1}
\State $\mat{z} = \mat{u}_i^T\, \textrm{reshape}(\mat{z},[m,r_{j+1}])$
\State $\mat{z} = \mat{z}\, \textrm{reshape}(\ten{T}^{(j+1)},[r_{j+1},mr_{j+2}])$
\EndFor
\State $\mat{z} = \textrm{reshape}(\mat{z},[p+n,m])\,\mat{u}_i^T$
\State $\mat{y}_i = \mat{C}\mat{x}_0 + \mat{z}(1:p)$
\State $\mat{x}_0 = \mat{A}\mat{x}_0 + \mat{z}(p+1:p+n)$
\EndFor
\end{algorithmic}
\end{alg}
\begin{figure}[tb]
\begin{center}
\input{figures/TNsim.tex}
\caption{Contraction of the $\ten{T}^{(1)},\ldots,\ten{T}^{(d)}$ tensor network with a vector $\mat{u}_t$ to obtain the concatenation of $\mat{D}\mat{u}_t\kpr{d}$ with $\mat{B}\mat{u}_t\kpr{d}$.}
\label{fig:sim}
\end{center}
\end{figure}
\section{Numerical Experiments}
\label{sec:experiments}
In this section we demonstrate the efficacy of TNMOESP and compare its performance with other state-of-the-art nonlinear system identification methods. All algorithms were implemented in MATLAB and run on a desktop computer with 8 cores running at 3.4 GHz and 64 GB RAM. The TNMOESP package can be freely downloaded from \url{https://github.com/kbatseli/TNMOESP}.

\subsection{Verifying correctness of TNMOESP}
\label{subsec:ex1}
First, we verify whether TNMOESP is able to correctly recover the polynomial state space model and compare its performance with the matrix-based implementation Algorithm~\ref{alg:moesp}. We fix the values $n=5$, $m=5$, $p=3$ and $L=2048$ and construct polynomial state space models~\eqref{eqn:pss} for degrees $d=2$ up to $d=8$. All $5 \times 5$ $\mat{A}$ matrices are constructed such that all eigenvalues have absolute values strictly smaller than 1, thus ensuring stability of the model. All other coefficients of $\mat{B},\mat{C},\mat{D}$ and the input signals were chosen from a standard normal distribution. The simulated outputs for the constructed models were then used to identify the models with both TNMOESP and a matrix based implementation of Algorithm~\ref{alg:moesp}\footnote{This matrix based implementation is available in the TNMOESP package as moespd.m.}.

The identified models were then validated by applying 1024 samples of different standard normal distributed inputs and comparing the simulated outputs of the ``real'' system with the outputs of the identified system. Table~\ref{tbl:ex1} lists the run times in seconds for the system identification algorithms to finish and relative errors $||\mat{y}-\hat{\mat{y}}||_F/||\mat{y}||_F$, where $\mat{y}$ is the real output and $\hat{\mat{y}}$ is the output computed from the estimated models. The matrix based implementation was not able to estimate the $d=8$ model due to insufficient memory. TNMOESP consistently outperforms the matrix based method and for $d=7$ is about 20 times faster than the matrix based method. The relative validation errors indicate that the models were estimated accurately up to machine precision, thus validating the correctness of TNMOESP.
\begin{table}[ht]
\begin{center}
\caption{Total run times for identification and relative validation errors for increasing $d$.}
\begin{tabular}{@{}rrrrr@{}}
$d$ & \multicolumn{2}{c}{Total Run time [s]} & \multicolumn{2}{c}{Rel. Val. error} \\ \midrule
& Algorithm~\ref{alg:moesp} & TNMOESP & Algorithm~\ref{alg:moesp} & TNMOESP\\\midrule
$2$& $5.2$ & $3.8$ & \num{1.1e-15} & \num{1.2e-14}\\
$3$& $7.1$& $4.4$ & \num{9.2e-16} & \num{4.7e-14}\\
$4$& $14.0$ & $7.2$ & \num{2.2e-15} & \num{3.1e-14} \\
$5$& $27.4$ & $7.7$ & \num{8.3e-15} & \num{2.9e-14} \\
$6$& $76.3$ & $9.8$ & \num{1.2e-14} & \num{1.4e-14}\\
$7$& $258$ & $13.2$ & \num{8.0e-14} & \num{1.3e-13} \\
$8$& NA & $15.1$ & NA & \num{4.4e-13}\\
\end{tabular}
\label{tbl:ex1}
\end{center}
\end{table}

For each of the constructed models with $n=5$, $m=5$, $p=3$, 5000 output samples were computed using a standard Kronecker product implementation of \eqref{eqn:pssmatricized} and with Algorithm~\ref{alg:simTN}. The total simulation times for both methods are listed in Table~\ref{tab:ex1_2}. The benefit of doing the simulation of the model with tensor networks becomes more pronounced as the degree $d$ increases with Algorithm~\ref{alg:simTN} being 70 times faster than the standard implementation.
\begin{table}[tb]
\begin{center}
\caption{Total run times for computation of 5000 output samples with Kronecker products and with Algorithm~\ref{alg:simTN} for increasing $d$.}
\label{tab:ex1_2}	
\begin{tabular}{@{}lccccccc@{}}
$d$		 & 2 & 3 & 4 & 5 & 6 & 7 & 8\\
\midrule
Kron. [s] & $0.07$ & $0.13$&$0.19$ &$0.35$ &  $1.44$ & $5.53$ &$28.3$\\
Alg.~\ref{alg:simTN} [s] & $0.07$ & $0.09$&$0.12$ &$0.16$ & $0.23$ & $0.31$&$0.40$ \\
\midrule
\end{tabular}
\end{center}
\end{table}

\subsection{Influence of noise - output error model}
In this experiment the effect of noise on the measured output on the identification with TNMOESP is investigated. Noise on the output implies that the block Hankel matrix $\mat{Y}$ will be perturbed by a block Hankel noise matrix $\mat{E}$, and therefore all singular values of $\mat{L}_{22}$ will be in the worst case perturbed by $||\mat{E}||_2$. This needs to be taken into account when estimating the system order $n$. Luckily, we are only interested in the left singular vectors of $\mat{L}_{22}$, which are not very sensitive to perturbations when $kp < N$~\cite[p.~166]{katayama2005subspace}. A polynomial state space system~\eqref{eqn:pss} was constructed as in Experiment~\ref{subsec:ex1} with $m=5,p=3,n=5$ and $d=5$. The outputs were simulated by exciting the system with 4096 normal distributed input samples. Six separate data sets with signal-to-noise ratios (SNRs) of 5dB, 10dB, 15dB, 20dB, 25dB and 30dB, respectively, were then created by adding zero-mean Gaussian noise to the simulated outputs. These six data sets were then used with TNMOESP to estimate a polynomial state space model. A different set of 4096 input samples was then used to generate validation data on the ``real" and estimated models. We define the simulation SNR (sim SNR) as
\begin{align*}
10\,\textrm{log}_{10}\, \left(\frac{\sum_i \mat{y}_i^2}{\sum_i ( \mat{y}_i-\hat{ \mat{y}}_i)^2}  \right)
\end{align*}
where $ \mat{y}_i$ is the output validation signal uncorrupted by noise and~$\hat{ \mat{y}}_i$ is the simulated output from the estimated model. Table~\ref{table:SNR} compares the SNR of the signals used in the identification (ID SNR) with the SNR of the simulated signal (SIM SNR). The relative validation errors $||\mat{y}-\hat{\mat{y}}||_F/||\mat{y}||_F$ are also indicated. As expected, the identification results improve when data of increasing SNR is used, which is indicated by the monotonically decreasing relative validation error. The simulated output of the estimated model has a consistent SNR improvement compared to the output used for the identification. The better the quality of the signals used for identification, the smaller the improvement. Even for the 5dB case, TNMOESP is able to correctly identify the underlying model, which indicates the robustness of the algorithm with respect to noise.
\begin{table}[ht]
\centering
\caption{Identification under 6 different SNR levels.}
\label{table:SNR} 
\begin{tabular}{@{}lrrrrrr@{}}\midrule
ID SNR  & 5dB & 10dB & 15dB & 20dB & 25dB & 30dB \\\midrule
SIM SNR & 9dB & 13dB & 17dB & 21dB & 26dB & 31dB \\
Rel. Val. error & 0.35 & 0.21 & 0.14 & 0.09 & 0.05 & 0.02\\ \midrule
\end{tabular}
\end{table}

\subsection{High-end valve control amplifier}
\label{subsec:amplifier}
In this experiment we compare the performance of TNMOESP with other models and methods on real-world data. The data set is from the same experiment as described in~\cite[p.~3936]{schoukens2011parametric} and the system under consideration is a high-end valve control amplifier, which is normally used as a preamplifier for audio signals. The amplifier is a single-input-single-output system and was fed a flat spectrum random phase multisine with a period of 4096 samples, sampled at 1.25 MHz.

We compare four different models and methods. For each of these models/methods, only the one with the best relative validation error is reported. First, a linear state space system was identified by Algorithm~\ref{alg:moesp} with system order $n=3$ using the first 1000 samples. Then, a polynomial state space model was identified using TNMOESP with $d=6$ and $n=30$, also using the first 1000 samples. In addition, we also identified a Volterra model of degree $d=2$ and memory $M=30$ using the MALS tensor method described in~\cite{MVMALS}, also using the first 1000 samples. Finally, a general polynomial state space as described in~\cite{PADUART2010647} was identified using the iterative methods of the PNLSS MATLAB toolbox\footnote{The PNLSS MATLAB toolbox can be freely downloaded from \url{homepages.vub.ac.be/~ktiels/pnlss.html}} with $n=15$ and where both polynomials of state and output equations are of degree 4. In order to obtain good validation errors, the general polynomial state space model needed to be estimated on 2 periods of the input signal. Each of the models were then used to simulate the output from the input that was not used for identification. The run times and relative validation errors $||\mat{y}-\hat{\mat{y}||}/||\mat{y}||$, where $\mat{y}$ denotes the measured output and $\hat{\mat{y}}$ denotes the simulated output, for each of the methods and models are listed in Table~\ref{table:amplifier}. 
\begin{figure}[tb]
\begin{center}
\includegraphics[width=.5\textwidth]{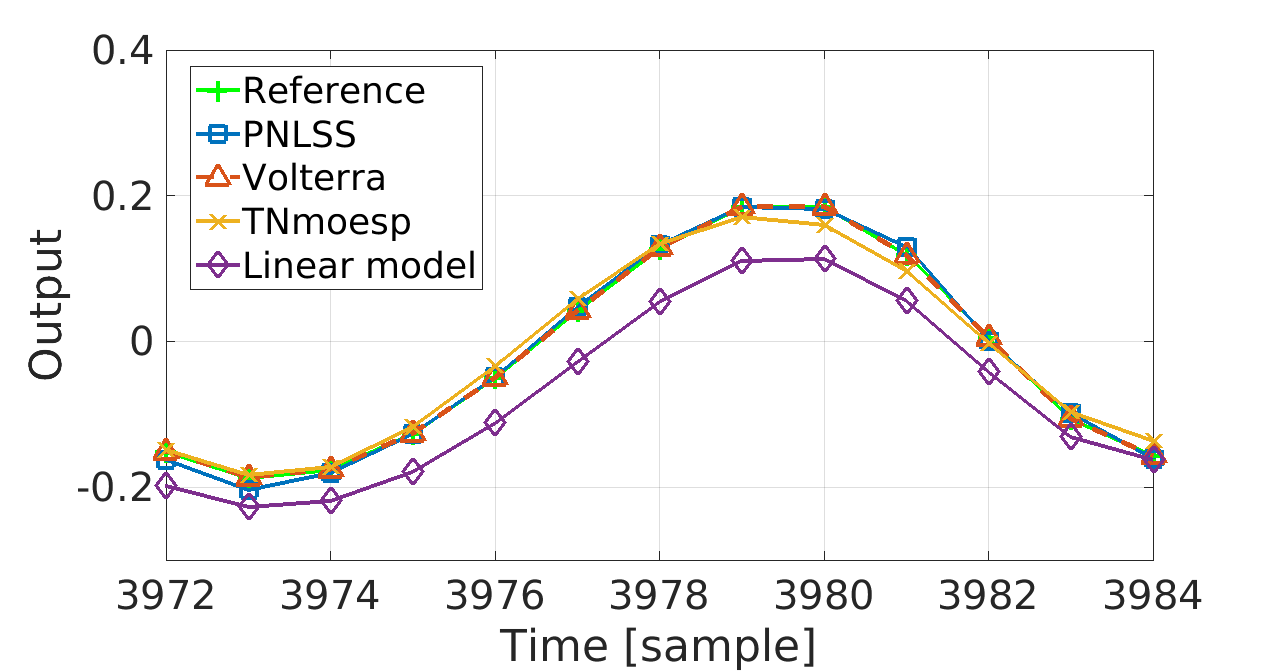}
\caption{Detail of reference and simulated amplifier output from different models.}
\label{fig:amplifier}
\end{center}
\end{figure}
\begin{table}[ht]
\centering
\caption{Run times and relative validation errors for four different models and methods.}
\label{table:amplifier} 
\begin{tabular}{@{}lrr@{}}\midrule
Method & Run time [s] & Rel. Val. error \\ \midrule
Linear & 0.26 &  0.418\\
TNMOESP &0.69 & 0.148 \\
PNLSS & 14264 & 0.087\\ 
Volterra & 1.61  & 0.004\\
\midrule
\end{tabular}
\end{table}

The linear state space model can be identified very quickly but performs the worst, while TNMOESP improves the validation at the cost of a slightly longer run time. The general polynomial state space system is capable of improving the validation error by one order of magnitude at the cost of a very significant run time. Convergence of the iterative method in the PNLSS toolbox was rather slow as it took 12317 seconds for the relative validation error to drop to 0.29. An interesting avenue of future research is to investigate whether it is possible to further refine the model obtained through TNMOESP by using it as an initial guess for the iterative routines in the PNLSS toolbox. This might alleviate the long run time due to slow convergence. Surprisingly, the Volterra model is able to achieve a relative validation error that is another order of magnitude smaller than the general polynomial state space system, which might suggest that the real-world system is better described by a Volterra model rather than a polynomial state space model. Figure~\ref{fig:amplifier} shows a few samples of the reference output and simulated outputs for the four different models. Due to the scale of the figure, it is not possible to distinguish the output from the Volterra model from the reference. As evident from the figure, all nonlinear models produce outputs that are closer to the real output compared to the linear model.

\section{Conclusions}
\label{sec:conclusions}
This article introduces a particular polynomial extension of the linear state space model and develops an efficient tensor network subspace identification method. The polynomial nonlinearity is described by tensor networks, which solves the need to store an exponentially large number of parameters. It was shown how the block Hankel input matrix that lies at the heart of subspace identification is exactly represented by a low rank tensor network, significantly reducing the computational and storage complexity during identification and simulation. The correctness of our tensor network subspace algorithm was demonstrated through numerical experiments, together with its robustness in the presence of noise on the measured output. Finally, the accuracy and total run time of our method were compared with three other models and methods. Future work includes the investigation whether models obtained through TNMOESP are good candidates as initial guesses for the iterative methods in the PNLSS toolbox.

\section*{Acknowledgements}
The authors would like to express their sincere gratitude to dr. Maarten Schoukens and dr. Koen Tiels for providing the real-world experiment data of Experiment~\ref{subsec:amplifier} to us and their support in using the PNLSS MATLAB toolbox.
\bibliographystyle{plain}        
\bibliography{references}           

\end{document}